\newenvironment{proofoftheorem}[1]{%
\par\noindent\textbf{Proof of Theorem #1.}\ }{}
\theoremstyle{definition}
\newtheorem{Assumption}{Assumption}
\newtheorem*{Discussion in section 5.4}{Discussion in section 5.4}
\newtheorem*{Remark*}{Remark}
\newtheorem{Remark}{Remark}
\newtheorem{Theorem}{Theorem}
\newtheorem*{Def}{Definition}
\newtheorem*{Fact}{Fact}
\newcites{online}{References}
\title{Triple Instrumented Difference-in-Differences\thanks{I am grateful to my advisors, Daiji Kawaguchi and Ryo Okui, for their continued guidance and support. I gratefully acknowledge the support provided by the JSPS KAKENHI Grant JP 24KJ0817. All errors are my own.}}
\author{Sho Miyaji\thanks{Graduate School of Economics, The University of Tokyo, 7-3-1 Hongo, Bunkyoku, Tokyo 113-0033, Japan; Email: \href{mailto:shomiyaji-apple@g.ecc.u-tokyo.ac.jp}{shomiyaji-apple@g.ecc.u-tokyo.ac.jp}.}}
\date{\today}
\begin{document}
\onehalfspacing
\maketitle   
\begin{abstract}
In this paper, we formalize a triple instrumented difference-in-differences (DID-IV). In this design, a triple Wald-DID estimand, which divides the difference-in-difference-in-differences (DDD) estimand of the outcome by the DDD estimand of the treatment, captures the local average treatment effect on the treated. The identifying assumptions mainly comprise a monotonicity assumption, and the common acceleration assumptions in the treatment and the outcome. We extend the canonical triple DID-IV design to staggered instrument cases. We also describe the estimation and inference in this design in practice.
\end{abstract} 
\bigskip
\noindent\textbf{Keywords:} difference-in-differences, triple difference, instrumented difference-in-differences, instrumental variable, local average treatment effect
\newpage
\section{Introduction}\label{sec1}
Instrumented difference-in-differences (DID-IV) is a method to estimate the effect of a treatment on an outcome, exploiting the timing variation of a policy shock as an instrument for treatment. In its canonical format, some units remain unexposed to the instrument during the two periods (unexposed group), while others become exposed in the second period (exposed group). The target parameter is the local average treatment effect on the treated, and the identifying assumptions mainly comprise a monotonicity assumption, and the parallel trends assumtpions in the treatment and the outcome between the two groups. In this design, the Wald-DID estimand, which scales the DID estimand of the outcome by the DID estimand of the treatment, captures the local average treatment effect on the treated (\cite{chasemartin2010-ch}, \cite{Hudson2017-tm}, \cite{Miyaji2023}).\par
DID-IV designs are widely used for causal inference across many fields in economics (e.g., \cite{Duflo2001-nh}, \cite{Black2005-aw}, \cite{Isen2017-py}), and are helpful when there is no control group or the parallel trends assumption in DID designs is not plausible in practice. For instance, to estimate the causal relationship between children and parents' education attainment, \cite{Black2005-aw} employ the DID-IV identification strategy, exploiting the timing variation of school reforms across municipalities as an instrument for parents' education attainment.\par
In empirical work, however, when researchers adopt DID-IV designs, they often leverage the group structure $A_{i} \in \{0,1\}$ in the data—such as a demographic characteristic—in addition to the timing variation of a policy shock (instrument). For instance, \cite{Deschenes2017-nh} estimate the effect of Nitrogen Oxides (NO$_{\text{x}}$) emissions on mortality rate using a DID-IV identification strategy, leveraging the NO$_{\text{x}}$ Budget Trading program as an instrument for these emissions. The program was implemented in participating states only during the summer months (May–September), but not in non-participating states and not in winter months (January–April or October–December). In other words, \cite{Deschenes2017-nh} construct an instrument based on three sources of variation: time (pre- or post-implementation of the program), state (participating or non-participating), and season (summer or winter).\par 
In this paper, we formalize the underlying identification strategy as a triple instrumented difference-in-differences. We define the target parameter and identifying assumptions in this design, and extend it staggered instrument cases. We also describe the estimation and inference in this design in practice.\par
First, we consider two periods settings, where a policy shock (instrument) is assigned only to one group at the second period (exposed group), while it is not assigned to other group over time (unexposed group), and it is only introduced to a particular group $A_i=1$ for $A_i \in \{0,1\}$ in an exposed group. In this setting, our target parameter is the local average treatment effect on the treated in group $A_i=1$; this parameter captures the treatment effects, for those who are the compliers in group $A_i=1$ and exposed group. The key identifying assumptions are (i) a monotonicity assumption and (ii) common acceleration assumptions in the treatment and the outcome. As in triple DID designs (\cite{Olden2022-uj}, \cite{Frohlich2019-cw}, \cite{Wooldridge2019-is}), the common acceleration assumption does not require parallel trends to hold within each group separately (i.e., among the exposed and unexposed groups when comparing group $A_i=1$ and $A_i=0$ ). Rather, it only requires that any bias due to a violation of parallel trends between $A_i=1$ and $A_i=0$ in an exposed group is offset by the corresponding bias in an unexposed group. We show that in this design, the triple Wald-DID estimand, which scales the difference-in-difference-in-differences (DDD) estimand of the outcome by the DDD estimand of the treatment, captures the local average treatment effect on the treated in group $A_i=1$.\par
We extend the canonical triple DID-IV design to multiple periods settings, where the instrument (policy shock) is adopted at different points in time across groups (e.g., states or counties that each unit belongs to), and it is only applied to some demographic group $A_i=1$ for $A_i \in \{0,1\}$ in each group. We call this the staggered triple DID-IV design, and define the target parameter and identifying assumptions. First, we partition groups, such as states or counties, into mutually exclusive and exhaustive cohorts, based on the initial adoption date of the instrument (policy shock). We then define our target parameter as the cohort specific local average treatment effect on the treated (CLATT) in group $A_i=1$; this parameter measures the treatment effects, for those who are the compliers at a given relative period $l$ in group $A_i=1$ and cohort $c$. Finally, we extend the identifying assumptions in the canonical triple DID-IV design to multiple periods settings, and show that each triple Wald-DID estimand captures the CLATT in group $A_i=1$ and cohort $c$ under this design.\par
We describe the estimation and inference in triple DID-IV design in practice. In two periods settings, one can estimate the sample analog of the triple Wald-DID estimand by running an IV regression that implements the triple DID regression in both the first stage and the reduced form. In multiple periods settings, the estimation proceeds in two steps. First, we restrict the data to include only two periods (before and after the policy shock) and two cohorts, with one cohort serving as a control group. Second, we run the IV regression on each such data subset, again applying the triple DID regression in both the first stage and the reduced form. In both the two-period and multiple-period cases, we study the asymptotic properties of the sample analog of the triple Wald-DID estimand, showing that it is consistent and asymptotically normal for the target parameter.\par
This paper is related to the recent DID-IV literature (\cite{chasemartin2010-ch}, \cite{Hudson2017-tm}, \cite{Miyaji2023}), and contributes to this literature  by introducing the common acceleration assumption, considered in triple DID designs (\cite{Olden2022-uj}, \cite{Frohlich2019-cw}, \cite{Wooldridge2019-is}), into the canonical DID-IV framework. In the DID-IV literature, \cite{chasemartin2010-ch} is the first to formalize the DID-IV design, showing that a Wald-DID estimand identifies the LATET under a monotonicity assumption, and the parallel trends assumptions in the treatment and the outcome. \cite{Hudson2017-tm} extends it to the case of a non-binary, ordered treatment. In this paper, given the additional group structure $A_i \in \{0,1\}$, we consider the triple Wald-DID estimand, and adopt the common acceleration assumptions in the treatment and the outcome, following the triple DID literature (\cite{Olden2022-uj}, \cite{Frohlich2019-cw}, \cite{Wooldridge2019-is}). Our triple DID-IV designs are robust to the potential violation of the parallel trends assumptions in the treatment and the outcome, and thus serve as an alternative to the canonical DID-IV design in practice.\par
The rest of the paper is organized as follows. Section \ref{sec2} establishes triple DID-IV designs in two periods settings. Section \ref{sec3} extends the canonical triple DID-IV design to multiple periods settings with staggered instrument. Section \ref{sec4} describes the estimation and inference in triple DID-IV designs. Section \ref{sec5} concludes. All proofs are given in the Appendix.
\section{Triple DID-IV design}\label{sec2}
\subsection{Set up}\label{sec2.1}
We consider panel data settings with two periods and $N$ units. Let $Y_{i,t}$ be the outcome and $D_{i,t} \in \{0,1\}$ be the treatment for unit $i$ and time $t$. Let $Z_{i,t} \in \{0,1\}$ be the instrument for unit $i$ and time $t$. Let $D_{i}=(D_{i,0}, D_{i,1})$ be the treatment path and $Z_{i}=(Z_{i,0}, Z_{i,1})$ be the instrument path for unit $i$. In the following, let $\mathcal{S}(R)$ be the support for any random variable $R$.\par
To motivate our setting considered in this paper, suppose that a researcher is interested in estimating the effect of a treatment on an outcome, but is concerned about endogeneity. To address this issue, the researcher constructs an instrument based on a policy shock that is assigned only in the second period. This shock is introduced exclusively to one group (exposed group) and not to the other (unexposed group). Furthermore, within the exposed group, the policy shock applies only to individuals for whom $A_i=1$ ($A_i \in \{0,1\}$)—for instance, only to women.\par 
The following assumption describes the above assignment process of the instrument. 
\begin{Assumption}[Triple DID-IV design]\label{sec2as0}
$Z_{i,0}=0$\hspace{3mm}\text{for all $i$}.
\begin{align*}
Z_{i,1}=
\begin{cases}
1 & \text{if}\hspace{2mm} A_i=1\hspace{2mm} \text{and}\hspace{2mm} C_i=1\hspace{2mm}\\
0 & \text{if otherwise}
\end{cases}
\end{align*}
\end{Assumption}\par
Here, $C_i \in \{0,1\}$ is the group indicator that takes one if unit $i$ belongs to the exposed group. Note that we do not impose any restrictions on the assignment process of the treatment. Therefore, the treatment path can take four values, that is, we have $D_i \in \{(0,0),(0,1),(1,0),(1,1)\}$.\par\par 
To make the situation more concrete, we provide an empirical example. \cite{Deschenes2017-nh} estimate the effect of Nitrogen Oxides (NO$_{\text{x}}$) emissions on mortality rate, using the NO$_{\text{x}}$ Budget Trading program as an instrument for these emissions. The program was implemented in participating states only during the summer months (May–September), but not in non-participating states and not in winter months (January–April or October–December). In their setup, the group variable $C_i$ is an indicator for whether the state (unit $i$ belongs to) participated in the program (participating vs. non-participating), and the group variable $A_i$ is an indicator for the season (summer vs. winter).\par
We now introduce the potential outcomes framework. Let $Y_{i,t}(d,z)$ be the potential outcome for unit $i$ and time $t$ if $D_i=d$ and $Z_i=z$. Let $D_{i,t}(z)$ be the potential treatment choice for unit $i$ and time $t$ if $Z_i=z$. Since the treatment and the instrument takes only two values in each period, we can write the observed treatment and the outcome as follows:
\begin{align*}
 &D_{i,t}=\sum_{z \in \mathcal{S}(Z)}\mathbf{1}\{Z_i=z\}D_{i,t}(z),\hspace{3mm}Y_{i,t}=\sum_{z \in \mathcal{S}(Z)}\sum_{d \in \mathcal{S}(D)}\mathbf{1}\{Z_i=z, D_i=d\}Y_{i,t}(d,z).
 \end{align*}\par
 Here, we assume the no carryover assumption on potential outcomes.
 \begin{Assumption}[No carryover assumption]\label{sec2as1}
\begin{align*}
\forall z \in \mathcal{S}(Z),\forall d\in \mathcal{S}(D),Y_{i,0}(d,z)=Y_{i,0}(d_0,z),Y_{i,1}(d,z)=Y_{i,1}(d_1,z),
\end{align*}
where $d=(d_0,d_1)$ is the generic element of the treatment path $D_i$.
\end{Assumption}
This assumption requires that the potential outcome $Y_{i,t}(d,z)$ depends only on the current treatment status $d_t$ for all $z \in \mathcal{S}(Z)$ and all $t \in \{0,1\}$. In the DID literature, \cite{De_Chaisemartin2020-dw} and \cite{Imai2021-dn} impose the similar assumption under non-staggered treatment settings.\par
 Finally, we introduce the group variable $G_i^{Z}=(D_{i,1}(0,0),D_{i,1}(0,1))$. This group variable describes the response of the treatment to the instrument path $Z_i$ in time $t=1$. Following the terminology in \cite{Imbens1994-qy}, we call $G_i^{Z}=(0,0) \equiv NT^{Z}$ as the never-takers, $G_i^{Z}=(0,1) \equiv CM^{Z}$ as the compliers, $G_i^{Z}=(1,0) \equiv DF^{Z}$ as the defiers, and $G_i^{Z}=(1,1) \equiv AT^{Z}$ as the always-takers.\par
 Based on the notation developed in this section, the next section defines the target parameter in triple DID-IV design.
\subsection{The target parameter in triple DID-IV design}
In triple DID-IV design, our target parameter is the local average treatment effect on the treated (LATET) at time $t=1$ and group $A_i=1$ defined below.
\begin{Def}
The local average treatment effect on the treated (LATET) at time $t=1$ and group $A_i=1$ is
\begin{align*}
LATET &\equiv E[Y_{i,1}(1)-Y_{i,1}(0)|C_i=1,A_i=1,D_{i,1}((0,1)) > D_{i,1}((0,0))]\\
&=E[Y_{i,1}(1)-Y_{i,1}(0)|C_i=1,A_i=1,CM^{Z}].
\end{align*}\par
\end{Def}
This parameter measures the treatment effects, for those who are the compliers ($CM^{Z}$) at time $t=1$ in group $C_i=1$, and $A_i=1$. The LATET is also defined in the recent DID-IV literature (\cite{chasemartin2010-ch}, \cite{Miyaji2023}). The difference here is that it is conditional on group variable $A_i=1$ in triple DID-IV design.\par
\begin{Remark}
When we have a non-binary, ordered treatment $D_{i,t} \in \{0,\dots,J\}$, our target parameter is the average causal response on the treated (ACRT) given $A_i=1$ defined below.
\begin{Def}
The average causal response on the treated (ACRT) is
\begin{align*}
ACRT \equiv \sum_{j=1}^{J}w_j \cdot E[Y_{i,1}(j)-Y_{i,1}(j-1)|D_{i,1}((0,1)) \geq j > D_{i,1}((0,0)), C_i=1, A_i=1],
\end{align*}
where the weight $w_j$ is:
\begin{align*}
w_j=\frac{Pr(D_{i,1}((0,1)) \geq j > D_{i,1}((0,0))|C_i=1, A_i=1)}{\sum_{j=1}^{J} Pr(D_{i,1}((0,1)) \geq j > D_{i,1}((0,0))|C_i=1, A_i=1)}.
\end{align*}
This parameter is the conditional version of the average causal response considered in \cite{Angrist1995-ij}. In the recent DID-IV literature, \cite{Miyaji2023} considers the similar parameter.
\end{Def}
\end{Remark}
\subsection{The identifying assumptions in triple DID-IV design}\label{sec2.3}
In this section, we establish the identifying assumptions in triple DID-IV design. In this design, we exploit the group structure $A_i \in \{0,1\}$ in the data, in addition to the timing variation of a policy shock. We therefore consider the following estimand, calling it the triple Wald-DID estimand:
\begin{align*}
w_{DID} &= \frac{DID_{Y,C=1,A=1}-DID_{Y,C=0,A=1}
-(DID_{Y,C=1,A=0}-DID_{Y,C=0,A=0})}{DID_{D,C=1,A=1}-DID_{D,C=0,A=1}
-(DID_{D,C=1,A=0}-DID_{D,C=0,A=0})},
\end{align*}
where
\begin{align*}
&DID_{Y,C=c,A=a}=E[Y_{i,1}-Y_{i,0}|C_i=c,A_i=a],\\
&DID_{D,C=c,A=a}=E[D_{i,1}-D_{i,0}|C_i=c,A_i=a],
\end{align*}
for $a \in \{0,1\}$ and $c \in \{0,1\}$.\par
Note that this estimand scales the difference-in-difference-in-differences (DDD) estimand of the outcome by the DDD estimand of the treatment, a natural extension of the Wald-DID estimand considered in the recent DID-IV literature (\cite{chasemartin2010-ch}, \cite{Hudson2017-tm}, \cite{Miyaji2023}, and \cite{Miyaji2023-tw}).\par
We consider the following identifying assumptions for the triple Wald-DID estimand to identify the LATET at time $t=1$ and group $A_i=1$.\par
\begin{Assumption}[Exclusion restriction]\label{sec2as2}
\begin{align*}
\forall z \in \mathcal{S}(Z),\forall d\in \mathcal{S}(D),\forall t\in \{0,1\},Y_{i,t}(d,z)=Y_{i,t}(d).
\end{align*}
\end{Assumption}
This assumption requires that the potential outcome $Y_{i,t}(d,z)$ depends only on the treatment path $d \in \mathcal{S}(D)$ and does not depend on the instrument path $z \in \mathcal{S}(Z)$ for all $t \in \{0,1\}$.\par 
Given Assumptions \ref{sec2as1}-\ref{sec2as2}, we can write the observed outcome $Y_{i,t}$ as follows:
\begin{align*}
Y_{i,t}=D_{i,t}Y_{i,t}(1)+(1-D_{i,t})Y_{i,t}(0).
\end{align*}
For any $z \in \mathcal{S}(Z)$, let $Y_{i,t}(D_{i,t}(z))$ be the outcome if $Z_i=z$:
\begin{align*}
Y_{i,t}(D_{i,t}(z))=D_{i,t}(z)Y_{i,t}(1)+(1-D_{i,t}(z))Y_{i,t}(0).
\end{align*}
Following the terminology in \cite{Miyaji2023}, we call $Y_{i,t}(D_{i,t}((0,0)))$ as unexposed outcome, and $Y_{i,t}(D_{i,t}((0,1)))$ as exposed outcome for unit $i$ and time $t$.\par
\begin{Assumption}[Monotonicity assumption in time $t=1$]\label{sec2as3}
\begin{align*}
D_{i,1}((0,1)) \geq D_{i,1}((0,0))\hspace{3mm}\text{with probability 1}.
\end{align*}
\end{Assumption}
This assumption requires that the instrument path affects the potential treatment choice at period $t=1$ in one direction, excluding the defiers ($DF^{Z}$). This assumption is common in the IV literature (e.g., \cite{Imbens1994-qy}).
\begin{Assumption}[No anticipation in the first stage]
\label{sec2as4}
\begin{align*}
D_{i,0}((0,1))=D_{i,0}((0,0))\hspace{2mm}\text{for all $i$ with $C_i=1$ and $A_i=1$}.
\end{align*}
\end{Assumption}
This assumption requires that the potential treatment choice before the exposure to the instrument is equal to the one without instrument in group $A_i=1$ in an exposed group. The no anticipation assumption is commonly imposed on the potential outcome in the recent DID literature (e.g., \cite{Callaway2021-wl}, \cite{Athey2022-uo}, \cite{Roth2023-ig}). The difference here is that it is imposed on the potential treatment choice in DID-IV designs (\cite{Miyaji2023}).\par
\begin{Assumption}[Relevance condition]
\label{sec2as5}
\begin{align*}
DID_{D,C=1,A=1}-DID_{D,C=0,A=1}
-(DID_{D,C=1,A=0}-DID_{D,C=0,A=0}) > 0.
\end{align*}
\end{Assumption}
This assumption is the relevance condition in the first stage, ensuring that the triple Wald-DID estimand is well defined.\par
Finally, we make the common acceleration assumptions in the treatment and the outcome. These assumptions are unique in triple DID-IV design.\par
\begin{Assumption}[Common acceleration assumption in the treatment]
\label{sec2as6}
\begin{align*}
&E[D_{i,1}((0,0))-D_{i,0}((0,0))|C_i=1,A_i=1]\\
&-E[D_{i,1}((0,0))-D_{i,0}((0,0))|C_i=1,A_i=0]\\
&\hspace{4cm} = \hspace{4cm} \\
&E[D_{i,1}((0,0))-D_{i,0}((0,0))|C_i=0,A_i=1]\\
&-E[D_{i,1}((0,0))-D_{i,0}((0,0))|C_i=0,A_i=0].
\end{align*}
\end{Assumption}

\begin{Assumption}[Common acceleration assumption in the outcome]
\label{sec2as7}
\begin{align*}
&E[Y_{i,1}(D_{i,1}(0,0))-Y_{i,0}(D_{i,0}((0,0)))|C_i=1,A_i=1]\\
&-E[Y_{i,1}(D_{i,1}(0,0))-Y_{i,0}(D_{i,0}((0,0)))|C_i=1,A_i=0]\\
&\hspace{4cm} = \hspace{4cm} \\
&E[Y_{i,1}(D_{i,1}(0,0))-Y_{i,0}(D_{i,0}((0,0)))|C_i=0,A_i=1]\\
&-E[Y_{i,1}(D_{i,1}(0,0))-Y_{i,0}(D_{i,0}((0,0)))|C_i=0,A_i=0].
\end{align*}
\end{Assumption}
Assumption \ref{sec2as6} and \ref{sec2as7} require that the bias arising from the parallel trends assumption in the treatment and the outcome between group $0$ and $1$ is the same between exposed and unexposed groups. In triple DID designs, this assumption is made on the untreated outcome (\cite{Frohlich2019-cw}, \cite{Wooldridge2019-is}, \cite{Olden2022-uj}).\par
The following theorem shows that the triple Wald-DID estimand identifies the LATET in time $t=1$ and group $A_i=1$ under Assumptions \ref{sec2as0}-\ref{sec2as7}.
\begin{Theorem}\label{sec2.theorem1}
If Assumptions \ref{sec2as0}-\ref{sec2as7} hold, the triple Wald-DID estimand $w_{DID}$ captures the LATET at time $t=1$ and group $A_i=1$; that is,
\begin{align*}
w_{DID}=E[Y_{i,1}(1)-Y_{i,1}(0)|C_i=1,A_i=1,CM^{Z}].
\end{align*}
\end{Theorem}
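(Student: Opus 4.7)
The plan is to simplify the DDD numerator and the DDD denominator of $w_{DID}$ separately into closed-form expressions involving the complier share $\Pr(CM^{Z}\mid C_i=1,A_i=1)$ and the LATET, and then divide. The key structural fact from Assumption~\ref{sec2as0} is that $Z_i=(0,1)$ if and only if $C_i=A_i=1$, while $Z_i=(0,0)$ in each of the other three $(C_i,A_i)$ cells. This pins down, cell by cell, which potential treatment equals the observed treatment and, via Assumptions~\ref{sec2as1} and \ref{sec2as2}, which potential outcome $Y_{i,t}(D_{i,t}(z))$ equals $Y_{i,t}$.

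For the denominator, I would first rewrite each $DID_{D,C=c,A=a}$ in terms of potential treatments: in the $(1,1)$ cell it equals $E[D_{i,1}((0,1))-D_{i,0}((0,1))\mid C_i=1,A_i=1]$, and in each other cell it equals $E[D_{i,1}((0,0))-D_{i,0}((0,0))\mid C_i=c,A_i=a]$. Assumption~\ref{sec2as4} replaces $D_{i,0}((0,1))$ by $D_{i,0}((0,0))$ in the $(1,1)$ cell. Then, adding and subtracting the pivot $E[D_{i,1}((0,0))-D_{i,0}((0,0))\mid C_i=1,A_i=1]$ splits the denominator into (i) $E[D_{i,1}((0,1))-D_{i,1}((0,0))\mid C_i=1,A_i=1]$ plus (ii) a DDD-type block of untreated-path trends that vanishes by Assumption~\ref{sec2as6}. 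Monotonicity (Assumption~\ref{sec2as3}) identifies $D_{i,1}((0,1))-D_{i,1}((0,0))$ with $\mathbf{1}\{G_i^{Z}=CM^{Z}\}$, so the denominator equals $\Pr(CM^{Z}\mid C_i=1,A_i=1)$, which Assumption~\ref{sec2as5} guarantees is positive.

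For the numerator, I would apply the same cell-by-cell rewriting to each $DID_{Y,C=c,A=a}$ using Assumptions~\ref{sec2as0}--\ref{sec2as2} and the no-anticipation Assumption~\ref{sec2as4}. The analogous add-and-subtract pivot $E[Y_{i,1}(D_{i,1}((0,0)))-Y_{i,0}(D_{i,0}((0,0)))\mid C_i=1,A_i=1]$, combined with Assumption~\ref{sec2as7}, kills the trend block and leaves $E[Y_{i,1}(D_{i,1}((0,1)))-Y_{i,1}(D_{i,1}((0,0)))\mid C_i=1,A_i=1]$. Decomposing by compliance type and using monotonicity to rule out defiers, the integrand is zero on never-takers and always-takers and equals $Y_{i,1}(1)-Y_{i,1}(0)$ on compliers; hence the numerator equals $E[Y_{i,1}(1)-Y_{i,1}(0)\mid C_i=1,A_i=1,CM^{Z}]\cdot\Pr(CM^{Z}\mid C_i=1,A_i=1)$. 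Dividing through gives $w_{DID}=\text{LATET}$.

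The main obstacle is purely bookkeeping: one must track which instrument path is realized in each $(C_i,A_i)$ cell and choose the correct $(0,0)$-path trend pivot in both the treatment and the outcome expansions so that the common-acceleration Assumptions~\ref{sec2as6} and \ref{sec2as7} apply cleanly. Once those pivots are inserted, the monotonicity-based decomposition into compliance types and the final ratio are essentially routine.
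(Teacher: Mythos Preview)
Your proposal is correct and follows essentially the same approach as the paper: rewrite each $(C_i,A_i)$ cell in terms of potential quantities using Assumption~\ref{sec2as0}, apply no anticipation (Assumption~\ref{sec2as4}), add and subtract the $(0,0)$-path trend pivot so that Assumptions~\ref{sec2as6}--\ref{sec2as7} cancel the remaining block, and finish with monotonicity. The only cosmetic difference is that the paper treats the numerator first and the denominator second, whereas you do them in the opposite order.
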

\begin{proof}
See Appendix.
\end{proof}
\begin{Remark}
When the treatment $D_{i,t}$ is non-binary, we have the following theorem.
\begin{Theorem}
Suppose that Assumptions \ref{sec2as0}-\ref{sec2as7} hold, which replace the binary treatment with non-binary one. Then, the triple Wald-DID estimand $w_{DID}$ captures the ACRT at time $t=1$ and group $A_i=1$; that is,
\begin{align*}
w_{DID}=ACRT.
\end{align*}
\end{Theorem}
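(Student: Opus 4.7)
The plan is to follow the same three-step structure as for Theorem~\ref{sec2.theorem1}, replacing only the final algebraic step with the Angrist and Imbens (1995) telescoping identity for an ordered treatment. First, I would rewrite every observed quantity appearing in $w_{DID}$ in terms of potential treatments and potential outcomes. By Assumption~\ref{sec2as0}, $Z_{i,0}=0$ for every unit, so combined with the no-anticipation assumption (Assumption~\ref{sec2as4}) the treatment at $t=0$ coincides with $D_{i,0}((0,0))$ for all units, and by the exclusion restriction (Assumption~\ref{sec2as2}) the outcome at $t=0$ equals $Y_{i,0}(D_{i,0}((0,0)))$ for all units. At $t=1$, the exposed cell $C_i=1,A_i=1$ receives instrument path $(0,1)$ and the three remaining cells receive $(0,0)$, giving the corresponding potential outcomes and treatments in the observed $Y_{i,1}$ and $D_{i,1}$.

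Next, in both the numerator and the denominator of $w_{DID}$, I would add and subtract the $(C=1,A=1)$ cell evaluated at instrument path $(0,0)$: that is, $E[Y_{i,1}(D_{i,1}((0,0)))-Y_{i,0}(D_{i,0}((0,0)))\mid C_i=1,A_i=1]$ in the numerator and $E[D_{i,1}((0,0))-D_{i,0}((0,0))\mid C_i=1,A_i=1]$ in the denominator. The residual four-group DDD combinations of potential outcomes/treatments at path $(0,0)$ then vanish by the common acceleration assumptions (Assumptions~\ref{sec2as7} and~\ref{sec2as6}), leaving
\begin{align*}
w_{DID}=\frac{E[Y_{i,1}(D_{i,1}((0,1)))-Y_{i,1}(D_{i,1}((0,0)))\mid C_i=1,A_i=1]}{E[D_{i,1}((0,1))-D_{i,1}((0,0))\mid C_i=1,A_i=1]},
\end{align*}
which is well defined by the relevance condition in Assumption~\ref{sec2as5}.

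Finally, I would invoke the Angrist-Imbens (1995) telescoping decomposition. Under the monotonicity assumption~\ref{sec2as3}, the pointwise identities
\begin{align*}
Y_{i,1}(D_{i,1}((0,1)))-Y_{i,1}(D_{i,1}((0,0)))&=\sum_{j=1}^{J}[Y_{i,1}(j)-Y_{i,1}(j-1)]\mathbf{1}\{D_{i,1}((0,1))\geq j>D_{i,1}((0,0))\},\\
D_{i,1}((0,1))-D_{i,1}((0,0))&=\sum_{j=1}^{J}\mathbf{1}\{D_{i,1}((0,1))\geq j>D_{i,1}((0,0))\}
\end{align*}
hold almost surely. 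Taking conditional expectations given $C_i=1,A_i=1$, rewriting each summand in the numerator as the product of $E[Y_{i,1}(j)-Y_{i,1}(j-1)\mid D_{i,1}((0,1))\geq j>D_{i,1}((0,0)),C_i=1,A_i=1]$ and the marginal probability $\Pr(D_{i,1}((0,1))\geq j>D_{i,1}((0,0))\mid C_i=1,A_i=1)$, and dividing, one recovers exactly the weights $w_j$ from the definition of the ACRT. The only obstacle is bookkeeping, keeping track of which instrument path governs which cell at which period in the add-and-subtract step; no new probabilistic ingredient beyond Assumptions~\ref{sec2as0}-\ref{sec2as7} is needed, so the extension to the non-binary, ordered treatment is immediate.
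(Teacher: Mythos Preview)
Your proposal is correct and follows essentially the same approach the paper indicates: it reproduces the reduction in the proof of Theorem~\ref{sec2.theorem1} to the single-cell ratio $E[Y_{i,1}(D_{i,1}((0,1)))-Y_{i,1}(D_{i,1}((0,0)))\mid C_i=1,A_i=1]/E[D_{i,1}((0,1))-D_{i,1}((0,0))\mid C_i=1,A_i=1]$, and then applies the \cite{Angrist1995-ij} telescoping identity, which is precisely the content of Theorem~4 in \cite{Miyaji2023} that the paper cites.
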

\begin{proof}
This theorem holds by combining the proof in Theorem \ref{sec2.theorem1} with the proof in Theorem 4 in \cite{Miyaji2023}. Thus, we omit it for brevity.
\end{proof}
\end{Remark}
\section{Triple DID-IV design with multiple time periods}\label{sec3}
In this section, we extend the canonical triple DID-IV design to multiple period settings with staggered instrument, calling it a staggered triple DID-IV design. 
\subsection{Set up}\label{sec3.1}
We consider panel data settings with $N$ units, observed in each period $t \in \{1,\dots,T\}$. Let $D_i=(D_{i,1},\dots,D_{i,T})$ be the treatment path and $Z_i=(Z_{i,1},\dots,Z_{i,T})$ be the instrument path for unit $i$.\par
For illustrative purposes, suppose that each unit belongs to a specific state, and that each state adopts a new policy (serving as the instrument) at different points in time. Once a state has adopted the policy, it remains in effect thereafter. Moreover, assume that within each state, the policy is introduced only to a particular demographic group $A_i=1$, such as females or males.\par
To describe the above situation, we first make the following assumption on the assignment process of the instrument.
\begin{Assumption}[Staggered instrument adoption]
\label{sec3as1}
$Z_{i,1}=0$\hspace{2mm}\text{for all $i$}.\\
\text{For each}\hspace{1mm} $t \in \{2,\dots,T\}$, $Z_{i,t-1} \leq Z_{i,t}$ \hspace{2mm}\text{for all $i$}.
\end{Assumption}
This assumption requires that the instrument in time $t=1$ is equal to zero for all $i$, excluding the already exposed units. Further, it requires that once units start receiving the instrument, they remain exposed to that instrument, which we call the staggered instrument adoption.\footnote{In the recent DID-IV literature, \cite{Miyaji2023} considers the same assumption. For the case of non-staggered instrument, see \cite{dechaisemartin2024differenceindifferencesestimatorstreatmentscontinuously}.}
\par 
Here, we introduce the cohort variable $C_i \in \{2,\dots,T,\infty\}$: $C_i=c$ if unit $i$ belongs to the states that receive the policy shock (instrument) at time $t=c$. We set $C_i=\infty$ if unit $i$ belongs to the states that are never exposed to the instrument.\par
Next, we assume that the instrument $Z_{i,t}$ takes one only if unit $i$ belongs to group $A_i=1$ in each cohort $C_i=c$.\par
\begin{Assumption}[Triple staggered DID-IV design]
\label{sec3as_triple}
\text{For each}\hspace{1mm} $i$\hspace{1mm}  and\hspace{1mm} $t \in \{1,\dots,T\}$, \begin{align*}
Z_{i,t}=
\begin{cases}
1 & \text{if}\hspace{2mm} A_i=1\hspace{2mm} \text{and}\hspace{2mm} C_i=c\hspace{1mm} (t \geq c)\hspace{2mm}\\
0 & \text{if otherwise}
\end{cases}
\end{align*}
\end{Assumption}
Let $E_{i}=\min\{t: Z_{i,t}=1\}$ be the initial adoption date of the instrument for unit $i$. We set $E_i=\infty$ if unit $i$ is never exposed to the instrument. Then, under Assumptions \ref{sec3as1}-\ref{sec3as_triple}, we have
\begin{align*}
E_{i}=
\begin{cases}
c & \text{if}\hspace{2mm} A_i=1\hspace{2mm} \text{and}\hspace{2mm} C_i=c\\
\infty & \text{if otherwise}
\end{cases}
\end{align*}\par
We rewrite the potential treatment choice $D_{i,t}(z)$, using the initial adoption date of the instrument $E_{i}$. Let $D_{i,t}^{c}$ be the potential treatment choice for unit $i$ in time $t$ if $E_i=c$. Let $D_{i,t}^{\infty}$ be the potential treatment choice for unit $i$ in time $t$ if $E_i=\infty$. In the following, we refer to $D_{i,t}^{\infty}$ as “never exposed treatment”. Then, we can express the observed treatment choice $D_{i,t}$ as follows:
\begin{align*}
D_{i,t}=D_{i,t}^{\infty}+\sum_{2 \leq c \leq T}(D_{i,t}^{c}-D_{i,t}^{\infty})\cdot \mathbf{1}\{E_i=c\}.
\end{align*}
Here, as in \cite{Miyaji2023}, we define $D_{i,t}-D_{i,t}^{\infty}$ to be the effect of the instrument on the potential treatment choice for unit $i$ in time $t$, calling it the “individual exposed effect in the first stage”.\footnote{\cite{Sun2021-rp} and \cite{Callaway2021-wl} define the effect of a treatment on an outcome in a similar fashion under staggered DID settings.}\par
Note that in contrast to staggered instrument adoption, we allow the general adoption process of the treatment, i.e., we allow that the treatment can turn on/off over time.\par 
Similar to section \ref{sec2}, we impose the no carry over assumption on potential outcomes.
\begin{Assumption}[No carryover assumption in multiple time periods]
\label{sec3as2}
\begin{align*}
\forall z \in \mathcal{S}(Z), \forall d\in \mathcal{S}(D), \forall t\in \{1,\dots,T\},Y_{i,t}(d,z)=Y_{i,t}(d_t,z)\hspace{3mm}\text{for all $i$},
\end{align*}
where $d=(d_0,\dots,d_t,\dots,d_T)$ is the generic element of the treatment path $D_i$.
\end{Assumption}

\par
Finally, we introduce the group variable $G_{i,e,t} \equiv (D_{i,t}^{\infty},D_{i,t}^{c})$ ($t \geq c$). This group variable expresses the response of the potential treatment choice at time $t$ to the instrument path $z$. Following section \ref{sec2.1}, we call $G_{i,c,t}=(0,0) \equiv NT_{c,t}$ as the never-takers, $G_{i,c,t}=(0,1) \equiv CM_{c,t}$ as the compliers, $G_{i,c,t}=(1,0) \equiv DF_{c,t}$ as the defiers and $G_{i,c,t}=(1,1) \equiv AT_{c,t}$ as the always-takers in period $t$ and the initial exposure date $c$. 
\subsection{The target parameter in staggered triple DID-IV design}\label{sec3.2}
In staggered triple DID-IV design, our target parameter is the cohort specific local average treatment effect on the treated (CLATT) given group $A_i=1$ defined below.
\begin{Def}
The cohort specific local average treatment effect on the treated (CLATT) in a relative period $l$ from the initial adoption of the instrument is 
\begin{align*}
CLATT_{c,c+l}&=E[Y_{i,c+l}(1)-Y_{i,c+l}(0)|C_i=c, A_i=1, D_{i,c+l}^{c} > D_{i,c+l}^{\infty}]\\
&=E[Y_{i,c+l}(1)-Y_{i,c+l}(0)|C_i=c, A_i=1,CM_{c,c+l}].
\end{align*}
\end{Def}
This parameter captures the treatment effects, for those who are the compliers in time $c+l$ in group $A_i=1$ and cohort $C_i=c$. This parameter is also considered in \cite{Miyaji2023}, but it is conditional on $A_i=1$ in triple DID-IV settings.

\begin{Remark} When we have a non-binary, ordered treatment in staggered triple DID-IV design, our target parameter is the cohort specific average causal response on the treated (CACRT) given group $A_i=1$ defined below.
\begin{Def} The cohort specific average causal response on the treated (CACRT) at a given relative period $l$ from the initial adoption of the instrument is
\begin{align*}
CACRT_{c,c+l} \equiv \sum_{j=1}^{J}w^{c}_{c+l,j} \cdot E[Y_{i,c+l}(j)-Y_{i,c+l}(j-1)|C_i=c, A_i=1, D_{i,c+l}^{c} \geq j > D_{i,c+l}^{\infty}],
\end{align*}
where the weight $w^{c}_{c+l,j}$ is:
\begin{align*}
w^{c}_{c+l,j}=\frac{Pr(D_{i,c+l}^{c} \geq j > D_{i,c+l}^{\infty}|C_i=c, A_i=1)}{\sum_{j=1}^{J} Pr(D_{i,c+l}^{c} \geq j > D_{i,c+l}^{\infty}|C_i=c, A_i=1)}.
\end{align*}
\end{Def}
Note that the CACRT is also defined in \cite{Miyaji2023}. The difference here is that it is conditional on $A_i=1$ in triple DID-IV design.
\end{Remark}

\subsection{The identifying assumptions in staggered triple DID-IV design}\label{sec3.3}
In this section, we formalize the identifying assumptions in staggered triple DID-IV design.\par
In staggered triple DID-IV design, we consider the following estimand to identify each $CLATT_{c,c+l}$:
\begin{align*}
w^{DID}_{c,l}=\frac{DID^{l}_{Y,C=c,A=1}-DID^{l}_{Y,C=\infty,A=1}-(DID^{l}_{Y,C=c,A=0}-DID^{l}_{Y,C=\infty,A=0})}{DID^{l}_{D,C=c,A=1}-DID^{l}_{D,C=\infty,A=1}-(DID^{l}_{D,C=c,A=0}-DID^{l}_{D,C=\infty,A=0})},
\end{align*}
where
\begin{align*}
&DID^{l}_{Y,C=c,A=a}=E[Y_{i,c+l}-Y_{i,c-1}|C_i=c,A_i=a],\\
&DID^{l}_{D,C=c,A=a}=E[D_{i,c+l}-D_{i,c-1}|C_i=c,A_i=a],
\end{align*}
for $a \in \{0,1\}$, $c \in \{2,\dots,T,\infty\}$ and $l \in \{0,\dots,T-c\}$. Note that this estimand is the triple Wald-DID esitmand, where the pre-exposed period is $c-1$ and the control group is $C_i=\infty$, the never exposed cohort.\par
The following assumptions are sufficient for each $w^{DID}_{c,l}$ to capture the $CLATT_{c,c+l}$. 

\begin{Assumption}[Exclusion restriction in multiple time periods]
\label{sec3as3}
\begin{align*}
\forall z \in \mathcal{S}(Z),\forall d \in \mathcal{S}(D),\forall t \in \{1,\dots,T\}, Y_{i,t}(d,z)=Y_{i,t}(d)\hspace{3mm}\text{for all $i$}.
\end{align*}
\end{Assumption}
Assumption \ref{sec3as3} is the exclusion restriction in multiple period settings. Assumption \ref{sec3as2} and \ref{sec3as3} imply that we can write $Y_{i,t}=D_{i,t}Y_{i,t}(1)+(1-D_{i,t})Y_{i,t}(0)$. Following section \ref{sec2.3}, we introduce the outcome in time $t$ if unit $i$ is exposed to the instrument path $z$:
\begin{align*}
Y_{i,t}(D_{i,t}(z))=D_{i,t}(z)Y_{i,t}(1)+(1-D_{i,t}(z))Y_{i,t}(0).
\end{align*}
Since $D_{i,t}(z)$ can be characterized by the initial adoption date of the instrument $E_i$, we can also rewrite $Y_{i,t}(D_{i,t}(z))$: let $Y_{i,t}(D_{i,t}^{c})$ be the outcome in time $t$ if unit $i$ is first exposed to the instrument in time $t=c$. Let $Y_{i,t}(D_{i,t}^{\infty})$ be the outcome in time $t$ if unit $i$ is never exposed to the instrument. 
\begin{Assumption}[Monotonicity assumption in multiple time periods]
\label{sec3as4}
\begin{align*}
\forall c\in \{2,\dots,T\},\hspace{2mm}\forall t \geq c,\hspace{2mm}, D_{i,t}^{c} \geq D_{i,t}^{\infty}=1\hspace{2mm}\text{for all $i$}.
\end{align*}
\end{Assumption}
This assumption requires that the individual exposed effect in the first stage, $D_{i,t}-D_{i,t}^{\infty}$, is non-negative after the exposure to the instrument. This assumption rules out the existence of the defiers $DF_{c,t}$ for all $c \in \{2,\dots,T\}$ and $t \geq c$.

\begin{Assumption}[No anticipation in the first stage]
\label{sec3as5}
\begin{align*}
\forall c\in \{2,\dots,T\},\hspace{2mm}\forall t < c,\hspace{2mm}D_{i,t}^{c}=D_{i,t}^{\infty}\hspace{2mm}\hspace{3mm}\text{for all $i$}.
\end{align*}
\end{Assumption}
This assumption requires that the instrument does not affect the potential treatment choice before the exposure to that instrument. In the DID-IV literature, \cite{Miyaji2023} imposes the same assumption.\footnote{In the DID literature, \cite{Callaway2021-wl} and \cite{Sun2021-rp} make the similar assumption on the potential outcome under staggered DID settings.}
\begin{Assumption}[Relevance condition based on a never  exposed cohort]
\label{sec3as6}
\begin{align*}
&\text{For each}\hspace{2mm}c\in \{2,\dots,T\}\hspace{2mm}\text{and}\hspace{2mm}l \in \{0,\dots,T-c\},\\ 
&DID^{l}_{D,C=c,A=1}-DID^{l}_{D,C=\infty,A=1}-(DID^{l}_{D,C=c,A=0}-DID^{l}_{D,C=\infty,A=0}) > 0.
\end{align*}
\end{Assumption}
This assumption guarantees that each triple Wald-DID estimand $w^{DID}_{c,l}$ is well defined.\par
Finally, we make the common acceleration assumptions in the treatment and the outcome based on a never exposed cohort. 
\begin{Assumption}[Common acceleration assumption in the treatment based on a never exposed cohort]
\label{sec3as7}
\begin{align*}
&\text{For each}\hspace{2mm}c\in \{2,\dots,T\}\hspace{2mm}\text{and}\hspace{2mm}t \in \{2,\dots,T\}\hspace{2mm}\text{such that}\hspace{2mm}t \geq c,\\ 
&E[D_{i,t}^{\infty}-D_{i,t-1}^{\infty}|C_i=c,A_i=1]-E[D_{i,t}^{\infty}-D_{i,t-1}^{\infty}|C_i=c,A_i=0]\\
&\hspace{5.5cm} = \hspace{5.5cm} \\
&E[D_{i,t}^{\infty}-D_{i,t-1}^{\infty}|C_i=\infty,A_i=1]-E[D_{i,t}^{\infty}-D_{i,t-1}^{\infty}|C_i=\infty,A_i=0].
\end{align*}
\end{Assumption}

\begin{Assumption}[Common acceleration assumption in the outcome based on a never exposed cohort]
\label{sec3as8}
\begin{align*}
&\text{For each}\hspace{2mm}c\in \{2,\dots,T\}\hspace{2mm}\text{and}\hspace{2mm}t \in \{2,\dots,T\}\hspace{2mm}\text{such that}\hspace{2mm}t \geq c,\\ 
&E[Y_{i,t}(D_{i,t}^{\infty})-Y_{i,t-1}(D_{i,t-1}^{\infty})|C_i=c,A_i=1]-E[Y_{i,t}(D_{i,t}^{\infty})-Y_{i,t-1}(D_{i,t-1}^{\infty})|C_i=c,A_i=0]\\
&\hspace{8cm} = \hspace{8cm} \\
&E[Y_{i,t}(D_{i,t}^{\infty})-Y_{i,t-1}(D_{i,t-1}^{\infty})|C_i=\infty,A_i=1]-E[Y_{i,t}(D_{i,t}^{\infty})-Y_{i,t-1}(D_{i,t-1}^{\infty})|C_i=\infty,A_i=0].
\end{align*}
\end{Assumption}

The following theorem shows that under Assumptions \ref{sec3as1}-\ref{sec3as8}, each triple Wald-DID estimand $w^{DID}_{c,l}$ captures the $CLATT_{c,c+l}$.
\begin{Theorem}\label{sec3.theorem2}
If Assumptions \ref{sec3as1}-\ref{sec3as8} hold, each triple Wald-DID estimand $w^{DID}_{c,l}$ captures the $CLATT_{c,c+l}$; that is,
\begin{align*}
w^{DID}_{c,l}=E[Y_{i,c+l}(1)-Y_{i,c+l}(0)|C_i=c, A_i=1,CM_{c,c+l}],
\end{align*}
for each $c\in \{2,\dots,T\}$ and $l \in \{0,\dots,T-c\}$.
\end{Theorem}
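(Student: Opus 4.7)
The plan is to reduce Theorem \ref{sec3.theorem2} to Theorem \ref{sec2.theorem1} by restricting the data, for each fixed $c$ and $l$, to a two-period, two-cohort subsample and then verifying that the multi-period assumptions specialize to the identifying assumptions of Section \ref{sec2} on that subsample. Concretely, I would condition on $C_i \in \{c,\infty\}$ and keep only the time indices $t \in \{c-1,\,c+l\}$, interpreting $c-1$ as the pre-period and $c+l$ as the post-period. Under this restriction, the indicator $\mathbf{1}\{C_i = c\}$ plays the role of the two-period exposure indicator, the pre-period instrument is uniformly zero (since $c-1 < c$), and the post-period instrument equals $\mathbf{1}\{A_i = 1,\,C_i=c\}$, matching Assumption \ref{sec2as0}. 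Thus $w^{DID}_{c,l}$ coincides literally with the triple Wald-DID estimand $w_{DID}$ of Section \ref{sec2} evaluated on this subsample, and $CLATT_{c,c+l}$ coincides with the corresponding LATET.

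Several remaining identifying assumptions then translate immediately: Assumption \ref{sec3as2} restricted to $t \in \{c-1,c+l\}$ yields Assumption \ref{sec2as1}; Assumption \ref{sec3as3} yields Assumption \ref{sec2as2}; Assumption \ref{sec3as5} evaluated at $t=c-1$ gives $D_{i,c-1}^{c}=D_{i,c-1}^{\infty}$, which in the two-period renaming is Assumption \ref{sec2as4}; Assumption \ref{sec3as4} at $t=c+l$ is Assumption \ref{sec2as3}; and Assumption \ref{sec3as6} is Assumption \ref{sec2as5} verbatim. The only step that is not immediate is the common acceleration in the treatment and in the outcome, since Assumptions \ref{sec3as7} and \ref{sec3as8} are stated period-by-period for each $t \geq c$, whereas Assumptions \ref{sec2as6} and \ref{sec2as7} require the analogous statement over the long horizon from $c-1$ to $c+l$.

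The main obstacle, and really the only non-routine step, is this telescoping. For the treatment, I would write
\begin{align*}
D_{i,c+l}^{\infty}-D_{i,c-1}^{\infty} \;=\; \sum_{s=c}^{c+l}\bigl(D_{i,s}^{\infty}-D_{i,s-1}^{\infty}\bigr),
\end{align*}
take conditional expectations given $(C_i,A_i)$, and form the triple difference over $(C_i,A_i) \in \{c,\infty\}\times\{0,1\}$. By linearity, this long-horizon triple difference equals the sum over $s \in \{c,\ldots,c+l\}$ of the per-period triple differences, each of which vanishes by Assumption \ref{sec3as7}; this delivers Assumption \ref{sec2as6} on the subsample. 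The identical telescoping argument applied to $Y_{i,t}(D_{i,t}^{\infty})$ together with Assumption \ref{sec3as8} yields Assumption \ref{sec2as7}. Once every two-period identifying assumption has been verified on the subsample, a direct appeal to Theorem \ref{sec2.theorem1} gives
\begin{align*}
w^{DID}_{c,l} \;=\; E\bigl[Y_{i,c+l}(1)-Y_{i,c+l}(0)\,\big|\,C_i=c,\,A_i=1,\,CM_{c,c+l}\bigr],
\end{align*}
which is the desired conclusion.
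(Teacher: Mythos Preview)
Your proposal is correct and is, in substance, the same argument the paper gives: the paper's proof simply re-runs the chain of equalities from Theorem \ref{sec2.theorem1} in the multi-period notation, which is exactly what your reduction-to-the-two-period-case amounts to. The one place where you are actually more careful than the paper is the telescoping step: Assumptions \ref{sec3as7} and \ref{sec3as8} are stated period-by-period (between $t-1$ and $t$), while the cancellation needed in the proof is over the long horizon from $c-1$ to $c+l$; the paper invokes Assumption \ref{sec3as8} (respectively \ref{sec3as7}) for this cancellation without spelling out the summation $\sum_{s=c}^{c+l}(\cdot_{s}-\cdot_{s-1})$, whereas you make it explicit.
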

\begin{proof}
See Appendix.
\end{proof}
Note that when there exists no never exposed cohort $C_i=\infty$, one can intead consider the following estimand:
\begin{align*}
w^{DID}_{c,l,m}=\frac{DID^{l}_{Y,C=c,A=1}-DID^{l}_{Y,m,A=1}-(DID^{l}_{Y,C=c,A=0}-DID^{l}_{Y,m,A=0})}{DID^{l}_{D,C=c,A=1}-DID^{l}_{D,m,A=1}-(DID^{l}_{D,C=c,A=0}-DID^{l}_{D,m,A=0})},
\end{align*}
where
\begin{align*}
&DID^{l}_{Y,m,A=a}=E[Y_{i,c+l}-Y_{i,c-1}|C_i=\max\{C_i\},A_i=a],\\
&DID^{l}_{D,m,A=a}=E[D_{i,c+l}-D_{i,c-1}|C_i=\max\{C_i\},A_i=a],
\end{align*}
for $a \in \{0,1\}$, $c \in \{2,\dots,\max\{C_i\}-1\}$ and $l \in \{0,\max\{C_i\}-1-c\}$. In this estimand, the control cohort is $C_i=\max\{C_i\}$, the last exposed cohort.\par
If we consider the above estimand $w^{DID}_{c,l,m}$, we can replace Assumptions \ref{sec3as6}-\ref{sec3as8} with Assumptions \ref{sec3as9}-\ref{sec3as11} below.

\begin{Assumption}[Relevance condition based on a last exposed cohort]
\label{sec3as9}
\begin{align*}
&\text{For each}\hspace{2mm}c\in \{2,\dots,\max\{C_i\}-1\}\hspace{2mm}\text{and}\hspace{2mm}l \in \{0,\max\{C_i\}-1-c\},\\ 
&DID^{l}_{D,C=c,A=1}-DID^{l}_{D,m,A=1}-(DID^{l}_{D,C=c,A=0}-DID^{l}_{D,m,A=0}) > 0.
\end{align*}
\end{Assumption}

\begin{Assumption}[Common acceleration assumption in the treatment based on a last exposed cohort]
\label{sec3as10}
\begin{align*}
&\text{For each}\hspace{2mm}c\in \{2,\dots,\max\{C_i\}-1\}\hspace{2mm}\text{and}\hspace{2mm}t\hspace{2mm}\text{such that}\hspace{2mm}c \leq t \leq \max\{C_i\}-1,\\ 
&E[D_{i,t}^{\infty}-D_{i,t-1}^{\infty}|C_i=c,A_i=1]-E[D_{i,t}^{\infty}-D_{i,t-1}^{\infty}|C_i=c,A_i=0]\\
&\hspace{5.5cm} = \hspace{5.5cm} \\
&E[D_{i,t}^{\infty}-D_{i,t-1}^{\infty}|C_i=\max\{C_i\},A_i=1]-E[D_{i,t}^{\infty}-D_{i,t-1}^{\infty}|C_i=\max\{C_i\},A_i=0].
\end{align*}
\end{Assumption}

\begin{Assumption}[Common acceleration assumption in the outcome based on a last exposed cohort]
\label{sec3as11}
\begin{align*}
&\text{For each } c \in \{2,\dots,\max\{C_i\}-1\} \text{ and } t \text{ such that } c \leq t \leq \max\{C_i\}-1, \\
&E[Y_{i,t}(D_{i,t}^{\infty}) - Y_{i,t-1}(D_{i,t-1}^{\infty}) \mid C_i = c, A_i = 1] \\
-&E[Y_{i,t}(D_{i,t}^{\infty}) - Y_{i,t-1}(D_{i,t-1}^{\infty}) \mid C_i = c, A_i = 0] \\
=&E[Y_{i,t}(D_{i,t}^{\infty}) - Y_{i,t-1}(D_{i,t-1}^{\infty}) \mid C_i = \max\{C_i\}, A_i = 1] \\
-&E[Y_{i,t}(D_{i,t}^{\infty}) - Y_{i,t-1}(D_{i,t-1}^{\infty}) \mid C_i = \max\{C_i\}, A_i = 0].
\end{align*}
\end{Assumption}
Then, we have the following theorem.
\begin{Theorem}\label{sec3.theorem3}
If Assumptions \ref{sec3as1}-\ref{sec3as5} and \ref{sec3as9}-\ref{sec3as11} hold, each triple Wald-DID estimand $w^{DID}_{c,l,m}$ captures the $CLATT_{c,c+l}$; that is,
\begin{align*}
w^{DID}_{c,l,m}=E[Y_{i,c+l}(1)-Y_{i,c+l}(0)|C_i=c, A_i=1,CM_{c,c+l}],
\end{align*}
for each $c \in \{2,\dots,\max\{C_i\}-1\}$ and $l \in \{0,\max\{C_i\}-1-c\}$.
\end{Theorem}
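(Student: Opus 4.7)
The plan is to mimic the proof of Theorem \ref{sec3.theorem2} almost verbatim, replacing the never-exposed control cohort $C_i=\infty$ with the last-exposed cohort $C_i=\max\{C_i\}$ throughout. The key observation that makes this substitution legitimate is that for any intermediate time $t$ with $c\le t\le \max\{C_i\}-1$, units in cohort $\max\{C_i\}$ have not yet been exposed to the instrument (their exposure date exceeds $t$), so Assumptions \ref{sec3as1} and \ref{sec3as_triple} give $Z_{i,t}=0$ and hence $D_{i,t}=D_{i,t}^{\infty}$ for them. In other words, on the restricted index set $c\le t\le \max\{C_i\}-1$, the last-exposed cohort behaves as an "effectively never-exposed" control, which is exactly why the range of $l$ in the statement is capped by $\max\{C_i\}-1-c$.

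For the denominator, I would rewrite the four conditional cells of $w^{DID}_{c,l,m}$ in potential-treatment notation. For $C_i=c,\,A_i=1$ we have $E_i=c$, so $D_{i,c+l}=D_{i,c+l}^{c}$, and by Assumption \ref{sec3as5} also $D_{i,c-1}=D_{i,c-1}^{\infty}$. For the other three cells, the observed treatment at both $t=c-1$ and $t=c+l$ equals $D_{i,t}^{\infty}$ (either because $A_i=0$ forces $E_i=\infty$, or because $C_i=\max\{C_i\}$ and $c+l<\max\{C_i\}$). Adding and subtracting $E[D_{i,c+l}^{\infty}-D_{i,c-1}^{\infty}\mid C_i=c,A_i=1]$ splits the denominator into an effect term $E[D_{i,c+l}^{c}-D_{i,c+l}^{\infty}\mid C_i=c,A_i=1]$ plus a DDD-of-never-exposed-treatment bias term between cohorts $c$ and $\max\{C_i\}$. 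Telescoping the bias term over $t=c,\ldots,c+l$ and invoking Assumption \ref{sec3as10} one period at a time annihilates it; Assumption \ref{sec3as4} (monotonicity) then identifies the effect term with $\Pr(CM_{c,c+l}\mid C_i=c,A_i=1)$, and Assumption \ref{sec3as9} secures nonzero denominator.

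The numerator is handled by the parallel argument on the outcome side. Using Assumptions \ref{sec3as2} and \ref{sec3as3}, rewrite $Y_{i,t}$ in terms of $Y_{i,t}(D_{i,t}^{c})$ or $Y_{i,t}(D_{i,t}^{\infty})$ cell by cell exactly as for the denominator, split off the DDD-of-never-exposed-outcome bias, telescope, and apply Assumption \ref{sec3as11} period by period to kill it. What remains is $E[Y_{i,c+l}(D_{i,c+l}^{c})-Y_{i,c+l}(D_{i,c+l}^{\infty})\mid C_i=c,A_i=1]$, which, partitioned over the never-taker, complier, and always-taker subgroups and using Assumption \ref{sec3as4} to rule out defiers, collapses to $E[Y_{i,c+l}(1)-Y_{i,c+l}(0)\mid C_i=c,A_i=1,CM_{c,c+l}]\cdot \Pr(CM_{c,c+l}\mid C_i=c,A_i=1)$; taking the ratio yields $CLATT_{c,c+l}$. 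The main technical nuisance — not really an obstacle — is the telescoping bookkeeping: Assumptions \ref{sec3as10} and \ref{sec3as11} are stated one period at a time, so one must verify at each step that the time index $t$ used to invoke them stays within $[c,\max\{C_i\}-1]$, which is precisely guaranteed by the constraint $l\le \max\{C_i\}-1-c$.
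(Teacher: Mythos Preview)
Your proposal is correct and follows essentially the same approach as the paper, which simply remarks that the result ``follows from the similar argument in the proof of Theorem~\ref{sec3.theorem2}'' with the last-exposed cohort in place of the never-exposed one. You in fact supply more detail than the paper does; the only small imprecision is in your opening paragraph, where passing from $Z_{i,t}=0$ to $D_{i,t}=D_{i,t}^{\infty}$ for units with $C_i=\max\{C_i\}$ and $A_i=1$ actually requires the no-anticipation Assumption~\ref{sec3as5} (since their observed treatment is $D_{i,t}^{\max\{C_i\}}$), but you correctly invoke that assumption in the detailed argument.
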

\begin{proof}
This theorem follows from the similar argument in the proof of Theorem \ref{sec3.theorem2}. Therefore, we omit it for brevity.
\end{proof}
\begin{Remark}
When the treatment is non-binary and ordered, we have the following theorem.
\begin{Theorem}\label{sec3.theorem5}
\begin{itemize}
\item [(i)] If Assumptions \ref{sec3as1}-\ref{sec3as8} is satisfied, which replace the binary treatment with the non-binary one, each triple Wald-DID estimand $w^{DID}_{c,l}$ captures the $CACRT_{c,c+l}$; that is,
\begin{align*}
w^{DID}_{c,l}=CACRT_{c,c+l},
\end{align*}
for each $c\in \{2,\dots,T\}$ and $l \in \{0,\dots,T-c\}$.
\item [(ii)] If Assumptions \ref{sec3as1}-\ref{sec3as5} and \ref{sec3as9}-\ref{sec3as11} is satisfied, which replace the binary treatment with the non-binary one, each triple Wald-DID estimand $w^{DID}_{c,l,m}$ captures the $CACRT_{c,c+l}$; that is,
\begin{align*}
w^{DID}_{c,l,m}=CACRT_{c,c+l},
\end{align*}
for each $c \in \{2,\dots,\max\{C_i\}-1\}$ and $l \in \{0,\max\{C_i\}-1-c\}$.
\end{itemize}
\begin{proof}
This theorem holds by combining the proof in Theorems \ref{sec3.theorem2}-\ref{sec3.theorem3} with the proof in Theorem 4 in \cite{Miyaji2023}. Thus, we omit it for brevity.
\end{proof}
\end{Theorem}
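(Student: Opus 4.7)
The plan is to reduce Theorem \ref{sec3.theorem5} to Theorems \ref{sec3.theorem2}--\ref{sec3.theorem3} by way of the Angrist (1995) telescoping identity, exactly as in Theorem 4 of \cite{Miyaji2023}. For part (i), I would first run the identical algebra used in the proof of Theorem \ref{sec3.theorem2} on the numerator and denominator of $w^{DID}_{c,l}$ separately, without using binariness. Using the no carryover assumption, the exclusion restriction, and the no anticipation assumption \ref{sec3as5}, each $DID^{l}_{Y,C=c,A=a}$ and $DID^{l}_{D,C=c,A=a}$ can be rewritten in terms of potential outcomes $Y_{i,t}(D_{i,t}^{c})$, $Y_{i,t}(D_{i,t}^{\infty})$ and potential treatments $D_{i,t}^{c}, D_{i,t}^{\infty}$. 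The common acceleration assumptions \ref{sec3as7}--\ref{sec3as8}, which are stated in terms of the never-exposed objects $D_{i,t}^{\infty}$ and $Y_{i,t}(D_{i,t}^{\infty})$ and therefore do not depend on whether $D$ is binary, then cancel the four-way difference down to the $(C_i=c, A_i=1)$ term only. This yields
\begin{align*}
w^{DID}_{c,l} = \frac{E[Y_{i,c+l}(D_{i,c+l}^{c}) - Y_{i,c+l}(D_{i,c+l}^{\infty}) \mid C_i=c, A_i=1]}{E[D_{i,c+l}^{c} - D_{i,c+l}^{\infty} \mid C_i=c, A_i=1]}.
\end{align*}

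Next I would import the non-binary machinery. For an ordered treatment taking values in $\{0,1,\dots,J\}$, the telescoping identity
\begin{align*}
Y_{i,c+l}(d) - Y_{i,c+l}(d') = \sum_{j=1}^{J} \bigl[Y_{i,c+l}(j) - Y_{i,c+l}(j-1)\bigr]\bigl(\mathbf{1}\{d \geq j\} - \mathbf{1}\{d' \geq j\}\bigr)
\end{align*}
applied with $d = D_{i,c+l}^{c}$ and $d' = D_{i,c+l}^{\infty}$, together with the non-binary monotonicity $D_{i,c+l}^{c} \geq D_{i,c+l}^{\infty}$, reduces the indicator difference to $\mathbf{1}\{D_{i,c+l}^{c} \geq j > D_{i,c+l}^{\infty}\}$. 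Taking conditional expectations given $C_i=c, A_i=1$ and iterating over the event $\{D_{i,c+l}^{c} \geq j > D_{i,c+l}^{\infty}\}$ gives
\begin{align*}
&E[Y_{i,c+l}(D_{i,c+l}^{c}) - Y_{i,c+l}(D_{i,c+l}^{\infty}) \mid C_i=c, A_i=1] \\
&= \sum_{j=1}^{J} E\bigl[Y_{i,c+l}(j) - Y_{i,c+l}(j-1) \mid C_i=c, A_i=1, D_{i,c+l}^{c} \geq j > D_{i,c+l}^{\infty}\bigr] \cdot \Pr\bigl(D_{i,c+l}^{c} \geq j > D_{i,c+l}^{\infty} \mid C_i=c, A_i=1\bigr).
\end{align*}
The same telescoping applied to the identity $D = \sum_{j=1}^{J} \mathbf{1}\{D \geq j\}$ (with monotonicity again) turns the denominator into $\sum_{j=1}^{J} \Pr(D_{i,c+l}^{c} \geq j > D_{i,c+l}^{\infty} \mid C_i=c, A_i=1)$. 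Dividing and recognizing the weights $w^{c}_{c+l,j}$ delivers $w^{DID}_{c,l} = CACRT_{c,c+l}$.

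For part (ii), the same two-step argument goes through verbatim with the never-exposed control replaced by the last-exposed cohort $C_i = \max\{C_i\}$: Assumptions \ref{sec3as10}--\ref{sec3as11} play exactly the role that Assumptions \ref{sec3as7}--\ref{sec3as8} played in part (i), collapsing the triple difference to the $(C_i=c, A_i=1)$ expectation, after which the Angrist telescoping step is identical.

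The only nontrivial step is the first reduction, because the common acceleration assumptions are couched in the never-exposed quantities $D_{i,t}^{\infty}$ and $Y_{i,t}(D_{i,t}^{\infty})$ while the observables involve $D_{i,t}^{c}$ and $Y_{i,t}(D_{i,t}^{c})$; this is precisely the gap bridged by no anticipation (Assumption \ref{sec3as5}) for pre-exposure periods and by adding and subtracting $E[Y_{i,c+l}(D_{i,c+l}^{\infty}) \mid \cdot]$ for post-exposure periods. Since this algebra is identical to that of Theorem \ref{sec3.theorem2} and does not invoke binariness of $D$, once it is carried out the telescoping identity finishes the argument, which is why the authors defer to the combination of the earlier proofs and omit the details.
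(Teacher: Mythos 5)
Your proposal is correct and follows exactly the route the paper intends: it carries out the DDD cancellation from the proofs of Theorems \ref{sec3.theorem2}--\ref{sec3.theorem3} (which, as you note, never uses binariness until the final step) and then substitutes the Angrist--Imbens telescoping identity for the binary-specific factorization $Y(D^{c})-Y(D^{\infty})=(D^{c}-D^{\infty})(Y(1)-Y(0))$, which is precisely the content of Theorem 4 in \cite{Miyaji2023} that the paper's one-line proof defers to. The details you supply, including the reduction of the indicator difference to $\mathbf{1}\{D_{i,c+l}^{c}\geq j> D_{i,c+l}^{\infty}\}$ under monotonicity and the matching telescoping of the denominator, are all accurate.
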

\end{Remark}

\section{Estimation and inference}\label{sec4}
In this section, we describe the estimation and inference in triple DID-IV design.\par
In two periods settings considered in Section \ref{sec2}, we can estimate the triple Wald-DID estimand $w_{DID}$ by its sample analog, which we denote $\hat{w}_{DID}$:
\begin{align*}
\hat{w}_{DID}=\frac{\widehat{DID}_{Y,C=1,A=1}-\widehat{DID}_{Y,C=0,A=1}
-(\widehat{DID}_{Y,C=1,A=0}-\widehat{DID}_{Y,C=0,A=0})}{\widehat{DID}_{D,C=1,A=1}-\widehat{DID}_{D,C=0,A=1}
-(\widehat{DID}_{D,C=1,A=0}-\widehat{DID}_{D,C=0,A=0})},
\end{align*}
where
\begin{align*}
&\widehat{DID}_{Y,C=c,A=a}=\frac{E_N[(Y_{i,1}-Y_{i,0})\cdot \mathbf{1}\{C_i=c,A_i=a\}]}{E_N[\mathbf{1}\{C_i=c,A_i=a\}]},\\
&\widehat{DID}_{D,C=c,A=a}=\frac{E_N[(D_{i,1}-D_{i,0})\cdot \mathbf{1}\{C_i=c,A_i=a\}]}{E_N[\mathbf{1}\{C_i=c,A_i=a\}]},
\end{align*}
for $a \in \{0,1\}$ and $c \in \{0,1\}$. Here, $E_N[\cdot]$ is the sample analog of the expectation $E[\cdot]$, and $\mathbf{1}\{\cdot\}$ is the indicator function.\par
The following theorem presents the asymptotic property of $\hat{w}_{DID}$. 
\begin{Theorem}
\label{sec4.theorem4}
Suppose Assumptions \ref{sec2as1}-\ref{sec2as7} hold. Then, the triple Wald-DID estimator $\hat{w}_{DID}$ is consistent and asymptotically normal for the LATET.
\begin{align*}
\sqrt{n}(\hat{w}_{DID}-LATET) \xrightarrow{d} \mathcal{N}(0,V(\psi_{i})),
\end{align*}
where $\psi_{i}$ is influence function for $\hat{w}_{DID}$ and defined in Equation $\eqref{Appendix_theorem4_inf}$ in Appendix.
\begin{proof}
See Appendix.
\end{proof}
Note that if we have a non-binary, ordered treatment, we can replace $LATET$ with $ACRT$.\par
\end{Theorem}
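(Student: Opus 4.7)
The plan is to combine the identification result of Theorem~\ref{sec2.theorem1} with a standard delta-method argument for smooth functionals of i.i.d.\ sample means. Since Theorem~\ref{sec2.theorem1} gives $w_{DID}=LATET$ under Assumptions~\ref{sec2as1}--\ref{sec2as7}, it suffices to show that $\hat{w}_{DID}$ is consistent and asymptotically normal for the estimand $w_{DID}$. The strategy is to rewrite $\hat{w}_{DID}=g(\hat{\mu})$ for a finite-dimensional vector $\hat{\mu}$ of cell-level sample means and a map $g$ that is continuously differentiable at the true mean vector, then apply the weak law of large numbers, the multivariate central limit theorem, and the multivariate delta method.

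Concretely, for each $(c,a)\in\{0,1\}^{2}$ I would define the three population moments $\mu^{Y}_{c,a}=E[(Y_{i,1}-Y_{i,0})\mathbf{1}\{C_i=c,A_i=a\}]$, $\mu^{D}_{c,a}=E[(D_{i,1}-D_{i,0})\mathbf{1}\{C_i=c,A_i=a\}]$, and $\pi_{c,a}=\Pr(C_i=c,A_i=a)$, and stack these twelve moments into $\mu\in\mathbb{R}^{12}$ with sample analog $\hat{\mu}$ formed from the i.i.d.\ panel. Each $\widehat{DID}$ in the definition of $\hat{w}_{DID}$ is the ratio of the corresponding pair of components of $\hat{\mu}$, so
\[
g(\mu)=\frac{\sum_{(c,a)}s_{c,a}\,\mu^{Y}_{c,a}/\pi_{c,a}}{\sum_{(c,a)}s_{c,a}\,\mu^{D}_{c,a}/\pi_{c,a}},
\]
with signs $s_{c,a}\in\{-1,+1\}$ encoding the DDD pattern and $\hat{w}_{DID}=g(\hat{\mu})$. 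Assumption~\ref{sec2as5} makes the denominator of $g$ strictly positive at $\mu$, and each $\pi_{c,a}$ is positive under i.i.d.\ sampling with all four $(C,A)$-strata occupied, so $g$ is $C^{1}$ in a neighborhood of $\mu$. Componentwise LLN then yields $\hat{\mu}\xrightarrow{p}\mu$, and the continuous mapping theorem gives $\hat{w}_{DID}\xrightarrow{p}g(\mu)=w_{DID}=LATET$. For asymptotic normality, finite second moments of $Y_{i,t}$ (boundedness of $D_{i,t}$ is automatic) yield $\sqrt{n}(\hat{\mu}-\mu)\xrightarrow{d}\mathcal{N}(0,\Sigma)$ via the multivariate CLT with $\Sigma=\mathrm{Var}(m_i)$ for the per-observation moment vector $m_i$, and the delta method produces $\sqrt{n}(\hat{w}_{DID}-LATET)\xrightarrow{d}\mathcal{N}(0,V(\psi_i))$ with $\psi_i=\nabla g(\mu)^{\top}(m_i-\mu)$.

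The main obstacle is expressing $\psi_i$ in the compact, interpretable form cited as Equation~\eqref{Appendix_theorem4_inf} rather than as a raw Jacobian contraction over twelve coordinates. The key simplification is that within each cell the three partial derivatives collapse: using $\mu^{Y}_{c,a}=DID_{Y,C=c,A=a}\cdot\pi_{c,a}$ (and analogously for $D$), the contribution from the $\pi_{c,a}$-perturbation exactly cancels the mean-subtracted parts of the $\mu^{Y}_{c,a}$ and $\mu^{D}_{c,a}$ terms, leaving the per-cell contribution $(s_{c,a}/(\pi_{c,a}M))\mathbf{1}\{C_i=c,A_i=a\}\bigl[(Y_{i,1}-Y_{i,0}-DID_{Y,C=c,A=a})-w_{DID}(D_{i,1}-D_{i,0}-DID_{D,C=c,A=a})\bigr]$, where $M$ denotes the population DDD of the treatment under Assumption~\ref{sec2as5}. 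Summing over $(c,a)$ yields the claimed $\psi_i$, and $V(\psi_i)=E[\psi_i^{2}]$ is then estimated by the sample analog for inference; the same argument extends to the non-binary treatment case by replacing $LATET$ with $ACRT$ via the extension cited in the remark following Theorem~\ref{sec2.theorem1}.
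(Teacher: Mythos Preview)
Your proposal is correct and follows essentially the same approach as the paper: consistency via the law of large numbers plus the continuous mapping theorem (together with Theorem~\ref{sec2.theorem1}), and asymptotic normality via a linearization of the ratio of sample means. The only cosmetic difference is that the paper packages the linearization through an iterated ``ratio influence function'' identity $\sqrt{n}(\hat{A}/\hat{B}-A/B)=n^{-1/2}\sum_i (a_i-(A/B)b_i)/B+o_p(1)$ rather than the full twelve-coordinate delta method you set up, but both routes land on the same per-cell contribution $s_{c,a}\,\pi_{c,a}^{-1}M^{-1}\mathbf{1}\{C_i=c,A_i=a\}\bigl[\delta_i-E[\delta_i\mid C_i=c,A_i=a]\bigr]$ with $\delta_i=(Y_{i,1}-Y_{i,0})-w_{DID}(D_{i,1}-D_{i,0})$, which is exactly the $\zeta_{i,c,a}$ form in \eqref{Appendix_theorem4_inf}.
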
 
In practice, one can estimate $\hat{w}_{DID}$ and its standard error by the following IV regression:
\begin{align*}
Y_{i,t}&=\beta_{0}+\beta_{1}\mathbf{1}_{C_i=1}+\beta_{2}\mathbf{1}_{T_i=1}+\beta_{3}\mathbf{1}_{A_i=1}+\beta_{4}\mathbf{1}_{C_i=1,T_i=1}+\beta_{5}\mathbf{1}_{C_i=1,A_i=1}+\beta_{6}\mathbf{1}_{T_i=1,A_i=1}\\
&+\beta_{IV}D_{i,t}+\epsilon_{i,t}.
\end{align*}
The first stage regression is:
\begin{align*}
D_{i,t}&=\pi_{0}+\pi_{1}\mathbf{1}_{C_i=1}+\pi_{2}\mathbf{1}_{T_i=1}+\pi_{3}\mathbf{1}_{A_i=1}+\pi_{4}\mathbf{1}_{C_i=1,T_i=1}+\pi_{5}\mathbf{1}_{C_i=1,A_i=1}+\pi_{6}\mathbf{1}_{T_i=1,A_i=1}\\
&+\pi_{7}\mathbf{1}_{C_i=1,T_i=1,A_i=1}+\eta_{i,t},
\end{align*}
where $\mathbf{1}_{A}$ is the  indicator function and takes one if $A$ is true. $T_i \in \{0,1\}$ is time indicator and takes one if unit $i$ is in time $t=1$. Note that this IV regression runs the triple DID regression in both the first stage and the reduced form. Therefore, the IV estimator $\hat{\beta}_{IV}$ is equal to the triple Wald-DID estimator $\hat{w}_{DID}$.\par
In multiple period settings considered in Section \ref{sec3}, we can estimate the triple Wald-DID estimand $w^{DID}_{c,l}$ and $w^{DID}_{c,l,m}$ by its sample analog, which we denote $\hat{w}^{DID}_{c,l}$ and $\hat{w}^{DID}_{c,l,m}$, respectively:
\begin{align*}
&\hat{w}^{DID}_{c,l}=\frac{\widehat{DID}^{l}_{Y,C=c,A=1}-\widehat{DID}^{l}_{Y,C=\infty,A=1}-(\widehat{DID}^{l}_{Y,C=c,A=0}-\widehat{DID}^{l}_{Y,C=\infty,A=0})}{\widehat{DID}^{l}_{D,C=c,A=1}-\widehat{DID}^{l}_{D,C=\infty,A=1}-(\widehat{DID}^{l}_{D,C=c,A=0}-\widehat{DID}^{l}_{D,C=\infty,A=0})},\\
&\hat{w}^{DID}_{c,l,m}=\frac{\widehat{DID}^{l}_{Y,C=c,A=1}-\widehat{DID}^{l}_{Y,m,A=1}-(\widehat{DID}^{l}_{Y,C=c,A=0}-\widehat{DID}^{l}_{Y,m,A=0})}{\widehat{DID}^{l}_{D,C=c,A=1}-\widehat{DID}^{l}_{D,m,A=1}-(\widehat{DID}^{l}_{D,C=c,A=0}-\widehat{DID}^{l}_{D,m,A=0})},
\end{align*}
where
\begin{align*}
&\widehat{DID}^{l}_{Y,C=c,A=a}=\frac{E_N[(Y_{i,c+l}-Y_{i,c-1})\cdot \mathbf{1}\{C_i=c,A_i=a\}]}{E_N[\mathbf{1}\{C_i=c,A_i=a\}]},\\
&\widehat{DID}^{l}_{D,C=c,A=a}=\frac{E_N[(D_{i,c+l}-D_{i,c-1})\cdot \mathbf{1}\{C_i=c,A_i=a\}]}{E_N[\mathbf{1}\{C_i=c,A_i=a\}]},\\
&\widehat{DID}^{l}_{Y,m,A=a}=\frac{E_N[(Y_{i,c+l}-Y_{i,c-1})\cdot \mathbf{1}\{C_i=\max\{C_i\},A_i=a\}]}{E_N[\mathbf{1}\{C_i=\max\{C_i\},A_i=a\}]},\\
&\widehat{DID}^{l}_{D,m,A=a}=\frac{E_N[(D_{i,c+l}-D_{i,c-1})\cdot \mathbf{1}\{C_i=\max\{C_i\},A_i=a\}]}{E_N[\mathbf{1}\{C_i=\max\{C_i\},A_i=a\}]}.
\end{align*}
The following theorem presents the asymptotic property of $\hat{w}^{DID}_{c,l}$ and $\hat{w}^{DID}_{c,l,m}$.
\begin{Theorem}
\label{sec4.theorem5}
\begin{itemize}
\item[(i)] Suppose Assumptions \ref{sec3as1}-\ref{sec3as8} hold. Then, each triple Wald-DID estimator $\hat{w}^{DID}_{c,l}$ is consistent and asymptotically normal for the $CLATT_{c,c+l}$.
\begin{align*}
\sqrt{n}(\hat{w}^{DID}_{c,l}-CLATT_{c,c+l}) \xrightarrow{d} \mathcal{N}(0,V(\psi_{i,c,l})),
\end{align*}
where $\psi_{i,c,l}$ is influence function for $\hat{w}^{DID}_{c,l}$ and defined in Equation $\eqref{Appendix_theorem5_inf1}$ in Appendix.
\item[(ii)] Suppose Assumptions \ref{sec3as1}-\ref{sec3as5} and \ref{sec3as9}-\ref{sec3as11} hold. Then, each triple Wald-DID estimator $\hat{w}^{DID}_{c,l,m}$ is consistent and asymptotically normal for the $CLATT_{c,c+l}$.
\begin{align*}
\sqrt{n}(\hat{w}^{DID}_{c,l,m}-CLATT_{c,c+l}) \xrightarrow{d} \mathcal{N}(0,V(\psi_{i,c,l,m})),
\end{align*}
where $\psi_{i,c,l,m}$ is influence function for $\hat{w}^{DID}_{c,l,m}$ and defined in Equation $\eqref{Appendix_theorem5_inf2}$ in Appendix.
\end{itemize}
\begin{proof}
See Appendix.
\end{proof}
\end{Theorem}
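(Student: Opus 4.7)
The plan is to recognize that each triple Wald-DID estimator is a smooth (ratio-of-sums-of-ratios) functional of a finite-dimensional vector of sample averages, and then apply the multivariate law of large numbers, the multivariate central limit theorem, and the delta method. Concretely, for part (i), I would stack into a single vector $\hat{\theta}_n$ the eight pairs of sample averages appearing in $\hat{w}^{DID}_{c,l}$, namely, for each $(a,c')$ with $a \in \{0,1\}$ and $c' \in \{c, \infty\}$, the averages $E_N[(Y_{i,c+l}-Y_{i,c-1})\mathbf{1}\{C_i=c',A_i=a\}]$, $E_N[(D_{i,c+l}-D_{i,c-1})\mathbf{1}\{C_i=c',A_i=a\}]$, and $E_N[\mathbf{1}\{C_i=c',A_i=a\}]$. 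Then $\hat{w}^{DID}_{c,l}=g(\hat{\theta}_n)$ for an explicit rational function $g$ whose denominator at $\theta=E[\hat{\theta}_n]$ equals the left-hand side of Assumption \ref{sec3as6}, which is strictly positive. Hence $g$ is continuously differentiable in a neighborhood of $\theta$.

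For consistency, the LLN yields $\hat{\theta}_n \xrightarrow{p} \theta$, and the continuous mapping theorem then gives $\hat{w}^{DID}_{c,l}\xrightarrow{p} w^{DID}_{c,l}$; combining with Theorem \ref{sec3.theorem2} delivers $\hat{w}^{DID}_{c,l}\xrightarrow{p} CLATT_{c,c+l}$. For asymptotic normality, the multivariate CLT (assuming the implicit finite second moment conditions on $Y_{i,t}$ and $D_{i,t}$) gives $\sqrt{n}(\hat{\theta}_n - \theta)\xrightarrow{d} \mathcal{N}(0,\Sigma)$ with $\Sigma = \operatorname{Var}(X_i)$, where $X_i$ is the per-unit contribution to $\hat{\theta}_n$. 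The delta method then yields $\sqrt{n}(\hat{w}^{DID}_{c,l}-w^{DID}_{c,l})\xrightarrow{d}\mathcal{N}(0,V(\psi_{i,c,l}))$, where the influence function takes the linearized form $\psi_{i,c,l}=\nabla g(\theta)^\top(X_i-\theta)$. In practice I would derive $\psi_{i,c,l}$ in two stages: first, the standard ratio-estimator influence function for each $\widehat{DID}^l_{Y,C=c',A=a}$ and $\widehat{DID}^l_{D,C=c',A=a}$ (an influence function of the form $\mathbf{1}\{C_i=c',A_i=a\}[(\Delta_{i}) - DID^l_{\cdot,C=c',A=a}]/P(C_i=c',A_i=a)$), and second, another application of the delta method to the outer ratio of DDD numerator over DDD denominator, which gives $\psi_{i,c,l}$ as the numerator influence function minus $w^{DID}_{c,l}$ times the denominator influence function, divided by the (nonzero) DDD denominator of the treatment.

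For part (ii), exactly the same argument applies after replacing the role of the cohort $C_i=\infty$ with $C_i=\max\{C_i\}$ throughout the construction of $\hat{\theta}_n$ and $g$, invoking Assumption \ref{sec3as9} to guarantee that the denominator of $g$ is nonzero and invoking Theorem \ref{sec3.theorem3} in place of Theorem \ref{sec3.theorem2} to identify the population limit as $CLATT_{c,c+l}$. The main obstacle is not conceptual (each step is textbook) but notational: the influence function $\psi_{i,c,l}$ involves eight separate ratio-estimator influence functions combined through two layers of the delta method, so the principal effort is a careful bookkeeping exercise to produce the explicit expressions of $\psi_{i,c,l}$ and $\psi_{i,c,l,m}$ as displayed in Equations (\ref{Appendix_theorem5_inf1}) and (\ref{Appendix_theorem5_inf2}) in the Appendix.
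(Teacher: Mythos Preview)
Your proposal is correct and mirrors the paper's own argument: consistency via the law of large numbers plus the continuous mapping theorem combined with Theorem~\ref{sec3.theorem2} (respectively Theorem~\ref{sec3.theorem3}), and asymptotic normality via repeated application of the delta method/ratio linearization to obtain the influence function as the DDD-in-$Y$ influence minus $w^{DID}_{c,l}$ times the DDD-in-$D$ influence, all divided by the first-stage DDD denominator. The paper packages the delta-method step as a ``Fact'' about ratios of asymptotically linear estimators (borrowed from \cite{De_Chaisemartin2018-xe}) rather than stacking into a vector $\hat{\theta}_n$, but this is purely a presentational difference.
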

Note that if we have a non-binary, ordered treatment, we can replace $CLATT_{c,c+l}$ with $CACRT_{c,c+l}$.\par
In staggered instrument settings, one can estimate $\hat{w}^{DID}_{c,l}$ ($\hat{w}^{DID}_{c,l,m}$) in two steps. First, we subset the data that contain only two cohorts and two periods, i.e., cohort $c$ and $\infty$ ($\max\{C_i\}$) and period $c+l$ and $c-1$. Next, in each data set, we run the following IV regression.
\begin{align*}
Y_{i,t}&=\beta^{c,l}_{0}+\beta^{c,l}_{1}\mathbf{1}_{C_i=c}+\beta^{c,l}_{2}\mathbf{1}_{T^{c,l}_i=c+l}+\beta^{c,l}_{3}\mathbf{1}_{A_i=1}+\beta^{c,l}_{4}\mathbf{1}_{C_i=c,T^{c,l}_i=c+l}+\beta^{c,l}_{5}\mathbf{1}_{C_i=c,A_i=1}\\
&+\beta^{c,l}_{6}\mathbf{1}_{T_i^{c,l}=c+l,A_i=1}+\beta^{c,l}_{IV}D_{i,t}+\epsilon^{c,l}_{i,t}.
\end{align*}
The first stage regression is:
\begin{align*}
D_{i,t}&=\pi^{c,l}_{0}+\pi^{c,l}_{1}\mathbf{1}_{C_i=c}+\pi^{c,l}_{2}\mathbf{1}_{T_i^{c,l}=c+l}+\pi^{c,l}_{3}\mathbf{1}_{A_i=1}+\pi^{c,l}_{4}\mathbf{1}_{C_i=c,T_i^{c,l}=c+l}+\pi^{c,l}_{5}\mathbf{1}_{C_i=c,A_i=1}\\
&+\pi^{c,l}_{6}\mathbf{1}_{T_i^{c,l}=c+l,A_i=1}+\pi^{c,l}_{7}\mathbf{1}_{C_i=c,T_i^{c,l}=c+l,A_i=1}+\eta^{c,l}_{i,t},
\end{align*}
where $T_i^{c,l} \in \{c-1,c+l\}$ is the time variable and takes $c+l$ if unit $i$ is in time $t=c+l$. Then, the IV estimator $\beta^{c,l}_{IV}$ corresponds to $\hat{w}^{DID}_{c,l}$ ($\hat{w}^{DID}_{c,l,m}$). We can also calculate the standard error by using the influence function derived in Theorem \ref{sec4.theorem5}.
\section{Conclusion}\label{sec5}
In this paper, we formalize a triple instrumented difference-in-differences. In this design, our target parameter is the local average treatment effect on the treated (LATET) and the identifying assumptions mainly comprise a monotonicity assumption and common acceleration assumptions in the treatment and the outcome. We show that in this design, the triple Wald-DID estimand, which scales the DDD estimand of the outcome by the DDD estimand of the treatment, captures the LATET. We extend the canonical triple DID-IV design to staggered instrument settings, and describe the estimation and inference in practice.
\appendix
\section*{Appendix}
In the proof of Theorems \ref{sec2.theorem1}-\ref{sec3.theorem2}, we omit the index $i$ to ease the notation. 
\begin{proofoftheorem}{1}
\end{proofoftheorem}
\begin{proof}
Note that Assumption \ref{sec2as5} guarantees that the triple Wald-DID estimand is well defined.\par
First, we consider the numerator of the triple Wald-DID estimand. Given Assumptions \ref{sec2as0}-\ref{sec2as7}, we have
\begin{align}
&E[Y_{1}-Y_{0}|C=1,A=1]-E[Y_{1}-Y_{0}|C=0,A=1]\notag\\
-&(E[Y_{1}-Y_{0}|C=1,A=0]-E[Y_{1}-Y_{0}|C=0,A=0])\notag\\
=&E[Y_{1}(D_1((0,1)))-Y_{0}(D_0((0,1)))|C=1,A=1]\notag\\
-&E[Y_{1}(D_1((0,0)))-Y_{0}(D_0((0,0)))|C=0,A=1]\notag\\
-&E[Y_{1}(D_1((0,0)))-Y_{0}(D_0((0,0)))|C=1,A=0]\notag\\
+&E[Y_{1}(D_1((0,0)))-Y_{0}(D_0((0,0)))|C=0,A=0]\notag\\
=&E[Y_{1}(D_1((0,1)))-Y_{0}(D_0((0,0)))|C=1,A=1]\notag\\
-&E[Y_{1}(D_1((0,0)))-Y_{0}(D_0((0,0)))|C=0,A=1]\notag\\
-&E[Y_{1}(D_1((0,0)))-Y_{0}(D_0((0,0)))|C=1,A=0]\notag\\
+&E[Y_{1}(D_1((0,0)))-Y_{0}(D_0((0,0)))|C=0,A=0]\notag\\
=&E[Y_1(D_1((0,1)))-Y_1(D_1((0,0)))|C=1,A=1]\notag\\
+&E[Y_{1}(D_1((0,0)))-Y_{0}(D_0((0,0)))|C=1,A=1]\notag\\
-&E[Y_{1}(D_1((0,0)))-Y_{0}(D_0((0,0)))|C=0,A=1]\notag\\
-&E[Y_{1}(D_1((0,0)))-Y_{0}(D_0((0,0)))|C=1,A=0]\notag\\
+&E[Y_{1}(D_1((0,0)))-Y_{0}(D_0((0,0)))|C=0,A=0]\notag\\
=&E[(D_1((0,1))-D_1((0,0)))\cdot(Y_1(1)-Y_1(0))|C=1,A=1]\notag\\
\label{Appendix_eq1}
=& LATET \cdot Pr(CM^{Z}|C=1,A=1).
\end{align}
Here, the first equality follows from Assumptions \ref{sec2as1}-\ref{sec2as2}. The second equality follows from Assumption \ref{sec2as4}. The third equality follows from Assumption \ref{sec2as7}. The final equality follows from Assumption \ref{sec2as3}.\par
Next, we consider the denominator of the triple Wald-DID estimand. Given Assumptions \ref{sec2as0}-\ref{sec2as7}, we have
\begin{align}
&E[D_{1}-D_{0}|C=1,A=1]-E[D_{1}-D_{0}|C=0,A=1]\notag\\
-&(E[D_{1}-D_{0}|C=1,A=0]-E[D_{1}-D_{0}|C=0,A=0])\notag\\
=&E[D_{1}((0,1))-D_{0}((0,1))|C=1,A=1]\notag\\
-&E[D_{1}((0,0))-D_{0}((0,0))|C=0,A=1]\notag\\
-&E[D_{1}((0,0))-D_{0}((0,0))|C=1,A=0]\notag\\
+&E[D_{1}((0,0))-D_{0}((0,0))|C=0,A=0]\notag\\
=&E[D_{1}((0,1))-D_{0}((0,0))|C=1,A=1]\notag\\
-&E[D_{1}((0,0))-D_{0}((0,0))|C=0,A=1]\notag\\
-&E[D_{1}((0,0))-D_{0}((0,0))|C=1,A=0]\notag\\
+&E[D_{1}((0,0))-D_{0}((0,0))|C=0,A=0]\notag\\
=&E[D_{1}((0,1))-D_{1}((0,0))|C=1,A=1]\notag\\
+&E[D_{1}((0,0))-D_{0}((0,0))|C=1,A=1]\notag\\
-&E[D_{1}((0,0))-D_{0}((0,0))|C=0,A=1]\notag\\
-&E[D_{1}((0,0))-D_{0}((0,0))|C=1,A=0]\notag\\
+&E[D_{1}((0,0))-D_{0}((0,0))|C=0,A=0]\notag\\
=&E[D_{1}((0,1))-D_{1}((0,0))|C=1,A=1]\notag\\
\label{Appendix_eq2}
=&Pr(CM^{Z}|C=1,A=1).
\end{align}
The second equality follows from Assumption \ref{sec2as4}. The third equality follows from Assumption \ref{sec2as6}. The final equality follows from Assumption \ref{sec2as3}.\par 
Combining \eqref{Appendix_eq1} with \eqref{Appendix_eq2}, we obtain the desirable result.
\end{proof}
\begin{proofoftheorem}{2}
\end{proofoftheorem}
\begin{proof}
Fix $c \in \{2,\dots,T\}$ and $l \in \{0,\dots,T-c\}$. Note that Assumption \ref{sec3as5} ensures that the triple Wald-DID estimand $w^{DID}_{c,l}$ is well defined.\par
First, we consider the numerator of $w^{DID}_{c,l}$. Given Assumptions \ref{sec3as1}-\ref{sec3as7}, we have
\begin{align}
&E[Y_{c+l}-Y_{c-1}|C=c,A=1]-E[Y_{c+l}-Y_{c-1}|C=\infty,A=1]\notag\\
-&(E[Y_{c+l}-Y_{c-1}|C=c,A=0]-E[Y_{c+l}-Y_{c-1}|C=\infty,A=0])\notag\\
=&E[Y_{c+l}(D_{c+l}^{c})-Y_{c-1}(D_{c-1}^{c})|C=c,A=1]\notag\\
-&E[Y_{c+l}(D_{c+l}^{\infty})-Y_{c-1}(D_{c-1}^{\infty})|C=\infty,A=1]\notag\\
-&E[Y_{c+l}(D_{c+l}^{\infty})-Y_{c-1}(D_{c-1}^{\infty})|C=c,A=0]\notag\\
+&E[Y_{c+l}(D_{c+l}^{\infty})-Y_{c-1}(D_{c-1}^{\infty})|C=\infty,A=0]\notag\\
=&E[Y_{c+l}(D_{c+l}^{c})-Y_{c-1}(D_{c-1}^{\infty})|C=c,A=1]\notag\\
-&E[Y_{c+l}(D_{c+l}^{\infty})-Y_{c-1}(D_{c-1}^{\infty})|C=\infty,A=1]\notag\\
-&E[Y_{c+l}(D_{c+l}^{\infty})-Y_{c-1}(D_{c-1}^{\infty})|C=c,A=0]\notag\\
+&E[Y_{c+l}(D_{c+l}^{\infty})-Y_{c-1}(D_{c-1}^{\infty})|C=\infty,A=0]\notag\\
=&E[Y_{c+l}(D_{c+l}^{c})-Y_{c+l}(D_{c+l}^{\infty})|C=c,A=1]\notag\\
+&E[Y_{c+l}(D_{c+l}^{\infty})-Y_{c-1}(D_{c-1}^{\infty})|C=c,A=1]\notag\\
-&E[Y_{c+l}(D_{c+l}^{\infty})-Y_{c-1}(D_{c-1}^{\infty})|C=\infty,A=1]\notag\\
-&E[Y_{c+l}(D_{c+l}^{\infty})-Y_{c-1}(D_{c-1}^{\infty})|C=c,A=0]\notag\\
+&E[Y_{c+l}(D_{c+l}^{\infty})-Y_{c-1}(D_{c-1}^{\infty})|C=\infty,A=0]\notag\\
=&E[(D_{c+l}^{c}-D_{c+l}^{\infty})\cdot (Y_{c+l}(1)-Y_{c+l}(0))|C=c,A=1]\notag\\
\label{Appendix_eq3}
=& CLATT_{c,c+l} \cdot Pr(CM_{c,c+l}|C=c,A=1).
\end{align}\par
The first equality follows from Assumptions \ref{sec3as1}-\ref{sec3as3}. The second equality follows from Assumption \ref{sec3as5}. The third equality follows from Assumption \ref{sec3as8}. The final equality follows from Assumption \ref{sec3as4}.\par
Next, we consider the denominator of $w^{DID}_{c,l}$. Given Assumptions \ref{sec3as1}-\ref{sec3as7}, we have
\begin{align}
&E[D_{c+l}-D_{c-1}|C=c,A=1]-E[D_{c+l}-D_{c-1}|C=\infty,A=1]\notag\\
-&(E[D_{c+l}-D_{c-1}|C=c,A=0]-E[D_{c+l}-D_{c-1}|C=\infty,A=0])\notag\\
=&E[D_{c+l}^{c}-D_{c-1}^{c}|C=c,A=1]\notag\\
-&E[D_{c+l}^{\infty}-D_{c-1}^{\infty}|C=\infty,A=1]\notag\\
-&E[D_{c+l}^{\infty}-D_{c-1}^{\infty}|C=c,A=0]\notag\\
+&E[D_{c+l}^{\infty}-D_{c-1}^{\infty}|C=\infty,A=0]\notag\\
=&E[D_{c+l}^{c}-D_{c-1}^{\infty}|C=c,A=1]\notag\\
-&E[D_{c+l}^{\infty}-D_{c-1}^{\infty}|C=\infty,A=1]\notag\\
-&E[D_{c+l}^{\infty}-D_{c-1}^{\infty}|C=c,A=0]\notag\\
+&E[D_{c+l}^{\infty}-D_{c-1}^{\infty}|C=\infty,A=0]\notag\\
=&E[D_{c+l}^{c}-D_{c+l}^{\infty}|C=c,A=1]\notag\\
+&E[D_{c+l}^{\infty}-D_{c-1}^{\infty}|C=c,A=1]\notag\\
-&E[D_{c+l}^{\infty}-D_{c-1}^{\infty}|C=\infty,A=1]\notag\\
-&E[D_{c+l}^{\infty}-D_{c-1}^{\infty}|C=c,A=0]\notag\\
+&E[D_{c+l}^{\infty}-D_{c-1}^{\infty}|C=\infty,A=0]\notag\\
\label{Appendix_eq4}
=& Pr(CM_{c,c+l}|C=c,A=1).
\end{align}\par
The first equality follows from Assumptions \ref{sec3as1}. The second equality follows from Assumption \ref{sec3as5}. The third equality follows from Assumption \ref{sec3as7}. The final equality follows from Assumption \ref{sec3as4}.\par
Combining \eqref{Appendix_eq3} with \eqref{Appendix_eq4}, we obtain the desirable result.
\end{proof}

\begin{proofoftheorem}{4}
\end{proofoftheorem}
\begin{proof}
First, we prove that $\hat{w}_{DID}$ is consistent for the LATET. By the Law of Large Numbers and continuous mapping theorem, we have $\hat{w}_{DID} \xrightarrow[]{p} w_{DID}$. Then, from Theorem \ref{sec2.theorem1}, we have $w_{DID}=LATET$ under Assumptions \ref{sec2as1}-\ref{sec2as7}.\par
Next, we prove that $\hat{w}_{DID}$ is asymptotically normal, deriving its influence function. We use the following fact, which is also found in \cite{De_Chaisemartin2018-xe}. 
\begin{Fact}
If
\begin{align*}
\sqrt{n}(\hat{A}-A)=\frac{1}{\sqrt{n}}\sum_{i=1}^n a_i+o_p(1),\hspace{3mm}\sqrt{n}(\hat{B}-B)=\frac{1}{\sqrt{n}}\sum_{i=1}^n b_i+o_p(1),
\end{align*}
we obtain
\begin{align*}
\sqrt{n}\left(\frac{\hat{A}}{\hat{B}}-\frac{A}{B}\right)=\frac{1}{\sqrt{n}}\sum_{i=1}^n\frac{a_i-(A/B)b_i}{B}+o_p(1).
\end{align*}
Using the above fact repeatedly, we have
\begin{align*}
\sqrt{n}(\hat{w}_{DID}-LATET)=\frac{1}{\sqrt{n}}\sum_{i=1}^{n}\psi_i+o_p(1),
\end{align*}
where $\psi_i$ is the influence function and takes the following form:
\begin{align}
\psi_i&=\frac{1}{DID_{D,C=1,A=1}-DID_{D,C=0,A=1}
-(DID_{D,C=1,A=0}-DID_{D,C=0,A=0})}\notag\\
\label{Appendix_theorem4_inf}
&\times \{\zeta_{i,1,1}-\zeta_{i,0,1}-\zeta_{i,1,0}+\zeta_{i,0,0}\}.
\end{align}
Here, we define $\delta_{i}=(Y_{i,1}-Y_{i,0})-w_{DID}\cdot(D_{i,1}-D_{i,0})$ and
\begin{align*}
\zeta_{i,c,a}=\frac{\mathbf{1}\{C_i=c,A_i=a\}\cdot[\delta_{i}-E[\delta_{i}|C_i=c,A_i=a]]}{\mathbf{1}\{C_i=c,A_i=a\}}.
\end{align*}
\end{Fact}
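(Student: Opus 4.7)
The plan is to establish consistency by the Law of Large Numbers and the continuous mapping theorem, and then to obtain asymptotic normality by deriving the influence-function representation of $\hat w_{DID}$ via iterated application of the ratio-linearization Fact stated in the excerpt.

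For consistency, each of the eight building blocks
\[
\widehat{DID}_{Y,C=c,A=a},\qquad \widehat{DID}_{D,C=c,A=a},\qquad (c,a)\in\{0,1\}^2,
\]
is a ratio whose numerator and denominator are sample averages of i.i.d.\ (bounded or square-integrable) variables. The LLN gives convergence in probability of each to its population analog, and Assumption \ref{sec2as5} ensures that the denominator of $w_{DID}$ is bounded away from zero, so $\hat w_{DID}$ is well defined with high probability. The continuous mapping theorem then yields $\hat w_{DID}\xrightarrow{p} w_{DID}$, and Theorem \ref{sec2.theorem1} identifies $w_{DID}$ with the LATET under Assumptions \ref{sec2as1}--\ref{sec2as7}, giving consistency.

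For asymptotic normality, I would proceed in three layers. Step 1: for each $(c,a)$, apply the Fact to the ratio defining $\widehat{DID}_{Y,C=c,A=a}$ to obtain a $\sqrt{n}$-expansion with per-observation contribution proportional to $\mathbf{1}\{C_i=c,A_i=a\}\cdot[(Y_{i,1}-Y_{i,0})-DID_{Y,C=c,A=a}]$, and analogously for $\widehat{DID}_{D,C=c,A=a}$. Step 2: since the numerator and denominator of $\hat w_{DID}$ are fixed linear combinations of these eight quantities, their linearizations combine additively to yield $\sqrt{n}$-expansions of the full numerator and denominator. Step 3: apply the Fact once more to the outer ratio; with $A/B=w_{DID}$, the term $(a_i - (A/B) b_i)/B$ collapses using $\delta_i = (Y_{i,1}-Y_{i,0}) - w_{DID}(D_{i,1}-D_{i,0})$ into the cell-wise quantities $\zeta_{i,c,a}$ and assembles into the claimed form \eqref{Appendix_theorem4_inf}. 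A standard Lindeberg--L\'evy CLT under finite second moments of $Y_{i,t}$ (boundedness of $D_{i,t}$ is automatic) then delivers $\sqrt n(\hat w_{DID}-\text{LATET})\xrightarrow{d}\mathcal N(0,V(\psi_i))$.

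The main obstacle is purely algebraic bookkeeping rather than anything probabilistic: one has to verify that after cascading the ratio linearizations through the eight cells and then through the outer ratio, the cross-terms from the outcome and treatment expansions merge cleanly through $\delta_i$, and that the resulting summand reduces exactly to $\{\zeta_{i,1,1}-\zeta_{i,0,1}-\zeta_{i,1,0}+\zeta_{i,0,0}\}$ divided by the population denominator. Because $\hat w_{DID}$ is a smooth function of sample means, no deeper issue arises; the delta method (or, equivalently, the explicit ratio Fact) applies directly.
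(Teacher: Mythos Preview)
Your proposal is correct and follows essentially the same route as the paper: consistency via the Law of Large Numbers, the continuous mapping theorem, and Theorem~\ref{sec2.theorem1}; asymptotic normality by repeated application of the ratio-linearization Fact to build up the influence function. The paper's own argument is in fact terser than yours---it simply states the Fact (citing \cite{De_Chaisemartin2018-xe}) and asserts the form of $\psi_i$ without spelling out the three layers you describe---so your Step~1--Step~3 decomposition is a more explicit version of the same derivation rather than a different approach.
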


\end{proof}
\begin{proofoftheorem}{5}
\end{proofoftheorem}
\begin{proof}
\begin{itemize}
\item [(i)] 
Fix $c \in \{2,\dots,T\}$ and $l \in \{0,\dots,T-c\}$.\par
First, we show that $\hat{w}^{DID}_{c,l}$ is consistent for the $CLATT_{c,c+l}$. By the Law of Large Numbers and continuous mapping theorem, we have $\hat{w}^{DID}_{c,l} \xrightarrow[]{p} w^{DID}_{c,l}$. Then, from Theorem \ref{sec3.theorem2}, we have $w^{DID}_{c,l}=CLATT_{c,c+l}$ under Assumptions \ref{sec3as1}-\ref{sec3as8}.\par
Next, we show that $\hat{w}^{DID}_{c,l}$ is asymptotically normal, deriving its influence function. By the similar argument in the proof of Theorem \ref{sec4.theorem4}, we have:
\begin{align*}
\sqrt{n}(\hat{w}^{DID}_{c,l}-CLATT_{c,c+l})=\frac{1}{\sqrt{n}}\sum_{i=1}^{n}\psi_{i,c,l}+o_p(1),
\end{align*}
where $\psi_{i,c,l}$ is the influence function and takes the following form:
\begin{align}
\psi_{i,c,l}&=\frac{1}{DID^{l}_{D,C=c,A=1}-DID^{l}_{D,C=\infty,A=1}-(DID^{l}_{D,C=c,A=0}-DID^{l}_{D,C=\infty,A=0})}\notag\\
\label{Appendix_theorem5_inf1}
&\times \{\zeta^{l}_{i,c,1}-\zeta^{l}_{i,c,0}-\zeta^{l}_{i,\infty,1}+\zeta^{l}_{i,\infty,0}\}.
\end{align}
Here, we define $\delta_{i,c,l}=(Y_{i,c+l}-Y_{i,c-1})-w^{DID}_{c,l}\cdot(D_{i,c+l}-D_{i,c-1})$ and
\begin{align*}
\zeta^{l}_{i,c,a}=\frac{\mathbf{1}\{C_i=c,A_i=a\}\cdot[\delta_{i,c,l}-E[\delta_{i,c,l}|C_i=c,A_i=a]]}{\mathbf{1}\{C_i=c,A_i=a\}},
\end{align*}
for $c \in \{2,\dots,T,\infty\}$ and $a \in \{0,1\}$.
\item [(ii)] This theorem follows from the similar argument in (i). The influence function for the $\hat{w}^{DID}_{c,l,m}$, which we denote $\psi_{i,c,l,m}$, takes the following form:
\begin{align}
&\psi_{i,c,l,m}\notag
\\
=&\frac{1}{DID^{l}_{D,C=c,A=1}-DID^{l}_{D,C=\max\{C_i\},A=1}-(DID^{l}_{D,C=c,A=0}-DID^{l}_{D,C=\max\{C_i\},A=0})}\notag\\
\label{Appendix_theorem5_inf2}
&\times \{\zeta^{l}_{i,c,1}-\zeta^{l}_{i,c,0}-\zeta^{l}_{i,\max\{C_i\},1}+\zeta^{l}_{i,\max\{C_i\},0}\}.
\end{align}
\end{itemize}
\end{proof}
\bibliographystyle{econ-econometrica.bst}
\bibliography{reference} 
\end{document}